\theoremstyle{definition}
\newcommand{\beq}{\begin{equation}}
\newcommand{\eeq}{\end{equation}}
\newtheorem{theorem}{Theorem}[section]
\newcommand{\version}{July 30, 2012}
\newtheorem{Lemma}{Lemma}[section]
 \newtheorem{Theorem}{THEOREM}[section]
\theoremstyle{definition}
\newcommand{\E}{\mathcal{E}}
\renewcommand\epsilon\varepsilon
\theoremstyle{definition}
\newcommand{\R}{\mathbb{R}}
\begin{document}

\title[Disordered Bose Einstein Condensates]{{\bf Disordered Bose Einstein Condensates\\ with Interaction in One Dimension}}

\author[R. Seiringer]{Robert Seiringer}
 \address{Department of Mathematics and Statistics, McGill University, 805 Sherbrooke Street West, Montreal, QC H3A 2K6, Canada}
\email{robert.seiringer@mcgill.ca}

\author[J. Yngvason]{Jakob Yngvason}
 \address{Fakult\"at f\"ur Physik, Universit{\"a}t Wien,	\\ \normalsize\it Boltzmanngasse 5, 1090 Vienna, Austria}
\email{jakob.yngvason@univie.ac.at}

\author[V. A. Zagrebnov]{Valentin A.Zagrebnov}\address{D\'{e}partement de Math\'{e}matiques,
Universit\'e d'Aix-Marseille (AMU) and Centre de Physique
Th\'eorique - UMR 7332,  Luminy Case 907, 13288 Marseille, Cedex09, France}
\email{valentin.zagrebnov@univ-amu.fr}

\date{
\version}
\newcommand\const{{\rm const\,}}
\newcommand{\eps}{\varepsilon}

\begin{abstract}
We study the effects of random scatterers on the ground state of the one-dimensional Lieb-Liniger model of interacting 
bosons on the unit interval in the Gross-Pitaevskii regime. We prove that Bose Einstein condensation survives even a 
strong random potential with a high density of scatterers. The character of the wave function of the condensate, however, 
depends in an essential way on the interplay between randomness and the strength of the two-body interaction. For low 
density of scatterers or strong interactions the wave function extends over the whole interval. High density of scatterers 
and weak interaction, on the 
other hand,  leads to localization of the wave function in a fragmented subset of the interval. 
\end{abstract}

\maketitle

\section{Introduction}

While the effects of random potentials on single particle Schr\"odinger operators \cite{pasturfigotin} and ideal Bose gases 
\cite{lenoblepasturzagrebnov}--\cite{JPZ} are rather 
well explored, the present understanding of such effects on many-body systems of interacting particles is much less complete.
In recent years, however,  many papers concerning the interplay of Bose-Einstein Condensation (BEC) and disorder have 
appeared of which references \cite{gimperlein}--\cite{cardosoetal} are but a sample.

In this paper we present results on a model that is in a sense the
simplest one imaginable where this interplay can be studied by
rigorous mathematical means. This is the one-dimensional Lieb-Liniger
model \cite{LL} of bosons with contact interaction in a `flat' trap,
augmented by an external random potential that is generated by Poisson
distributed point scatterers of equal strength.  We study the ground
state and prove that, no matter the strength of the random potential,
BEC is not destroyed by the random potential in the Gross-Pitaevskii
(GP) limit where the particle number tends to infinity while the
coupling parameter in a mean-field scaling stays fixed.  The character
of the wave function of the condensate, however, depends in an
essential way on the relative size of the three parameters
involved. These are the scaled coupling parameter $\gamma$ for the
interaction among the particles, the density of the scatterers $\nu$,
and the strength $\sigma$ of the scattering potential. All these
parameters are assumed to be large in suitable units.

The investigation consists of three parts. In the following Section 2
we prove Bose-Einstein condensation in the ground state of our model
in a GP limit. This result holds, in fact, for quite general external
potentials, provided they are bounded below. Specific properties of
the random potential enter only in the estimate of the energy gap of
an effective mean-field Hamiltonian that sets a bound to the depletion
of the condensate for finite particle number.

Section 3 is concerned with the properties of the condensate in
dependence of the parameters.  In particular, for $\sigma\gg 1$ we
identify three different ``phases" of the condensate. For large
$\gamma\gg \nu^2$ the condensate is extended over the whole trap (the
unit interval in our model). A transition from a delocalized to a
localized state takes place when $\gamma$ is of the order $\nu^2$, in
the sense that for $\gamma\ll \nu^2$ the density is essentially
distributed among a fraction $\lambda \ll 1$ of the $\nu\gg 1$ intervals
between the obstacles. For $\gamma\gg\nu/(\ln \nu)^2$ we still have
$\lambda\nu\gg 1$, but for $\gamma\sim\nu/(\ln \nu)^2$ the fraction of
intervals that are significant occupied shrinks to $O(\nu^{-1})$. We
stress, however, that in all cases there is complete BEC into a single
state in the limit when the particle number tends to infinity.

In an appendix, we prove an estimate on the energy gap of
one-dimensional Schr\"odinger Hamiltonians. This estimate, applied to
a mean field Hamiltonian, is needed to establish a lower
bound for the condensate fraction that depends only on the parameters
of the problem and not on the individual realizations of the random
potential. The estimate on the energy gap is a generalization of a
result from \cite{kirschsimon} and is of independent interest.
Finally, for convenience of the reader, we collect in a second
appendix some facts about the Poisson distribution of obstacles.

Our study of the Gross-Pitaevskii energy functional in Section 3 can be regarded as complementary to the investigations in 
\cite{gimperlein, luganetal1, sanchesetal1, pikovskietal, luganetal, aleiner, piraudetal} where transitions between 
delocalization and localization for solutions of the GP equation (also the time-dependent one) are considered from 
different points of view. Among these papers \cite{luganetal1} is the one 
closest to ours as far as the questions asked are concerned, and it is appropriate to make a brief comparison.

One difference is that the random potentials considered in \cite{luganetal1} are intended to model experimentally generated 
laser speckles and somewhat different from the present model. In both cases, however, there is a length scale associated 
with the random potential ($\nu^{-1}$ in our case, $\sigma_R$ in \cite{luganetal1}), and the strength of the random potential 
that is denoted by $V_R$ in \cite{luganetal} is the analogue of our $\nu\sigma$. 
Moreover, our coupling parameter $\gamma$ may be compared with the chemical potential $\mu$ in \cite{luganetal1}.

The main difference, however,  is that in \cite{luganetal1} BEC is stated to hold only for large $\mu$ (analogous to large 
$\gamma$ in our model), while for weaker coupling a breakdown of BEC through a transition to a fragmented condensate and 
finally to a Lifschitz glass phase is claimed. We, however, {\it prove rigorously} that complete BEC holds in the whole  
parameter range considered when the particle number tends to infinity. Some further comments on this issue will be made at 
the end of Section 2.

\section{The Many-Body Model and BEC}

The model we consider is the Lieb-Liniger model of bosons with contact
interaction on the unit interval, with an additional external
potential $V$.  In the next section we shall take $V$ to be a sum of
delta functions at random points in the unit interval, but for the
proof of the energy bounds and BEC in the present section the only
assumption used is that $V$ is a {\it nonnegative} sum of delta
functions and a locally integrable function.  The Hamiltonian on the
Hilbert space $L^2([0,1], dz)^{\otimes_{\rm symm}^N}$ is
\beq\label{ham}
H:=\sum_{i=1}^N\left(-\partial_{z_i}^2+V(z_i)\right)+\frac\gamma
N\sum_{i<j}\delta(z_i-z_j)\eeq with {$\gamma\geq 0$}, and we assume
Dirichlet boundary conditions at the end points of the unit
interval. The Hamiltonian has a unique ground state and we denote the
ground state energy by $E^{\rm QM}_0$.

The Gross-Pitaevskii energy functional is defined by
\begin{equation}\label{gpfunc}\mathcal E^{\rm GP}[\psi]:=\int_0^1\left\{|\psi'(z)|^2+V(z)|\psi(z)|^2+(\gamma/2)|\psi(z)|^4\right\}dz. \end{equation}
Again we assume Dirichlet boundary conditions. By standard methods one proves that there is a unique nonnegative minimizer 
$\psi_0\in H_0^1([0,1])$, normalized so that $\int \psi_0^2(z)dz=1$,  
with a corresponding energy $e^{\rm GP}$. This is equal to the ground state energy $e_0$ of the mean-field Hamiltonian
\beq \label{mfh}h:=-\partial_z^2+V(z)+\gamma\psi_0(z)^2-\frac \gamma 2\int\psi_0^4\eeq
with $\psi_0$ as the corresponding eigenfunction.

\begin{Theorem}[Energy bounds]
\beq\label{gse} e_0\geq \frac{E^{\rm QM}_0}N\geq e_0\left(1-({\rm const.}) N^{-1/3}\min\{\gamma^{1/2},\gamma\}\right).\eeq
\end{Theorem}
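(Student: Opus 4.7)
I would test $H$ on the Hartree product state $\Psi_{\rm tr}=\psi_0^{\otimes N}$. A direct calculation yields
\[
\langle\Psi_{\rm tr},H\Psi_{\rm tr}\rangle = N\int_0^1(|\psi_0'|^2+V\psi_0^2)\,dz + \frac{\gamma(N-1)}{2}\int_0^1\psi_0^4\,dz = Ne_0 - \frac{\gamma}{2}\int_0^1\psi_0^4,
\]
which immediately gives $E^{\rm QM}_0/N\leq e_0$.

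\textbf{Lower bound.} My plan is to split $H$ into a sum of mean-field one-body operators plus a residual that acts nontrivially only on excitations of the condensate, and then control the residual through a depletion bound obtained from the spectral gap of the mean-field Hamiltonian. Writing
\[
H=\sum_{i=1}^N h_i+\mathcal{R},\qquad \mathcal{R}=\frac{\gamma}{N}\sum_{i<j}\delta(z_i-z_j)-\gamma\sum_{i=1}^N\psi_0(z_i)^2+\frac{N\gamma}{2}\int\psi_0^4,
\]
one has $\sum_i h_i\geq Ne_0$ as operators since $h\geq e_0$. Up to an additive $O(\gamma)$ correction, i.e.\ $O(\gamma/N)$ per particle and hence harmless, $\mathcal{R}$ equals the two-body operator $\frac{\gamma}{2N}\sum_{i\neq j}k(z_i,z_j)$ with kernel $k(z,w)=\delta(z-w)-\psi_0(z)^2-\psi_0(w)^2+\int\psi_0^4$. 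The crucial identity $\int k(z,w)\psi_0(w)^2\,dw=0$ implies that every nonvanishing matrix element of $\mathcal{R}$ in an orthonormal basis containing $\psi_0$ involves at least one factor of the orthogonal projection $Q=1-|\psi_0\rangle\langle\psi_0|$, so that $\mathcal R$ can be dominated by the condensate depletion.

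Let $n=\sum_iQ_i$ count the non-condensate particles. The spectral gap $\Delta>0$ of $h$ above $e_0$, whose estimate is the subject of the appendix, yields $h\geq e_0+\Delta Q$ and hence the depletion bound $\Delta\langle n\rangle\leq \langle H\rangle - Ne_0 - \langle\mathcal{R}\rangle$. A Cauchy--Schwarz estimate then converts $\mathcal R$ into quantities involving $\langle n\rangle$ and $\|\psi_0\|_\infty$, with the singular delta function tamed by the one-dimensional Sobolev inequality $\|\phi\|_\infty^2\leq C(\|\phi'\|_2\|\phi\|_2+\|\phi\|_2^2)$. Inserting the standard GP asymptotics for $e_0$, $\Delta$ and $\|\psi_0\|_\infty$ in terms of $\gamma$ into the resulting self-consistent inequality produces $E^{\rm QM}_0\geq Ne_0(1-CN^{-1/3}\min\{\gamma^{1/2},\gamma\})$. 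The bifurcation between the $\gamma^{1/2}$ and $\gamma$ regimes reflects whether the gap $\Delta$ is controlled by the kinetic part of $h$ (small $\gamma$) or by the mean-field potential $\gamma\psi_0^2$ (large $\gamma$).

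The main obstacle is the Cauchy--Schwarz estimate on $\mathcal{R}$: since the delta function has no $L^2$ operator bound, one cannot invoke a mean-field argument for bounded interactions directly. Instead the singular term must be paired against the kinetic energy of the excited modes via the one-dimensional embedding $H^1\hookrightarrow L^\infty$, and this is the single step where the 1D structure is essential.
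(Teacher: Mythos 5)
Your upper bound is exactly the paper's: testing with $\psi_0^{\otimes N}$ gives $E_0^{\rm QM}\leq Ne_0-\frac{\gamma}{2}\int\psi_0^4\leq Ne_0$. The lower bound, however, has a structural flaw. You route the estimate of the residual $\mathcal{R}$ through the spectral gap $\Delta=e_1-e_0$ of the mean-field operator $h$, so your final error term necessarily carries (inverse powers of) $\Delta$. But $\Delta$ depends on $V$, and for the random potentials the paper is after, the only available lower bound on the gap is $\eta\ln(1+\pi e^{-2\eta})$ with $\eta=\sqrt{\pi^2+3m\sigma+3\gamma}$, i.e.\ exponentially small in $\sqrt{m\sigma}$. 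The theorem (see the remark immediately following it) asserts an error depending only on $N$ and $\gamma$, uniformly in $V\geq 0$; that uniformity is the whole point for the application, and a gap-dependent route cannot deliver it. The paper proves the energy bound \emph{without} the gap: it regularizes the delta by borrowing a fraction $\varepsilon$ of the kinetic energy, via $-\varepsilon\partial_z^2+\gamma\delta(z)\geq \tilde\gamma\,\delta_b(z)$ with $\delta_b(z)=\frac{1}{2b}e^{-|z|/b}$, and then uses that $\delta_b$ is of \emph{positive type} to bound the two-body term from below by one-body terms plus the constant $-\frac{\gamma}{2N}\delta_b(0)$. Only $h\geq e_0$ is used; the gap enters solely in Theorem 2.2 for the depletion.

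Second, the step you yourself flag as the crux --- Cauchy--Schwarz on the off-diagonal part of $\mathcal{R}$ with the delta tamed by $H^1\hookrightarrow L^\infty$ --- is not actually carried out, and it is not routine. The Schwarz inequality produces a term of the form $\alpha\langle Q_iQ_j\,\delta(z_i-z_j)\,Q_iQ_j\rangle$ with the unfavorable sign, which can only be absorbed into the kinetic energy of the excited particles; but you have already spent all of $\sum_i p_i^2$ on the bound $\sum_i h_i\geq Ne_0$. Making this work forces you to split off a fraction of the kinetic energy at the outset and prove a form bound $-\varepsilon\partial_z^2+\gamma\delta\geq(\text{bounded function})$, which is precisely the paper's first move --- so you would end up reconstructing its argument inside a more complicated second-quantized bookkeeping. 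A smaller point: the bifurcation $\min\{\gamma^{1/2},\gamma\}$ does not reflect which term controls the gap; it comes from inserting $e_0\geq\gamma/2$ (large $\gamma$) resp.\ $e_0\geq\pi^2$ (small $\gamma$) into an error of the form $N^{-1/3}\gamma\, e_0^{-1/2}$.
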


Note that the error term  depends only on $\gamma$ besides $N$ and is independent of $V$ (as long as  $V \geq 0$).
\smallskip

We denote the second eigenvalue of $h$ by $e_1$ and by $N_0$ the average occupancy of $\psi_0$ in the many-body ground state $\Psi_0$ of $H$, i.e.,
\beq N_0:=
{\rm tr}\, \rho^{(1)} |\psi_0\rangle\langle\psi_0|\eeq
where  $\rho^{(1)}$ is the one-particle reduced density matrix of $\Psi_0$. The depletion of the condensate is $\left(1-\frac {N_0}N\right)$.

\begin{Theorem}[BEC]\label{thm2}
\beq\label{depletion} \left(1-\frac {N_0}N\right)\leq ({\rm const.}) \frac {e_0}{e_1-e_0}
N^{-1/3}\min\{\gamma^{1/2},\gamma\}.
\eeq
\end{Theorem}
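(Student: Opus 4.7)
The plan is to combine the spectral gap of $h$ at its ground state $\psi_0$ with the energy bounds of Theorem 2.1 (\eqref{gse}). Since $e_0$ is the lowest eigenvalue of $h$ with eigenvector $\psi_0$ and $e_1$ is the second, one has the one-particle operator inequality
\beq
h \;\geq\; e_0 + (e_1-e_0)\bigl(1 - |\psi_0\rangle\langle\psi_0|\bigr),
\eeq
which lifts to the bosonic $N$-body space as
\beq
\sum_{i=1}^N h_i \;\geq\; N e_0 + (e_1-e_0)\bigl(N - \widehat N_0\bigr),
\eeq
where $\widehat N_0 := \sum_i |\psi_0\rangle\langle\psi_0|_i$ is the condensate occupation operator. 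Taking the expectation in the many-body ground state $\Psi_0$ and recalling the definition of $N_0$ yields
\beq
\langle \Psi_0, \sum_{i=1}^N h_i\, \Psi_0\rangle \;\geq\; N e_0 + (e_1-e_0)(N - N_0).
\eeq

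Next I compare $\sum_i h_i$ with $H$. Inserting the definition \eqref{mfh} of $h$ gives the identity
\beq
H - \sum_{i=1}^N h_i \;=\; \frac{\gamma}{N}\sum_{i<j}\delta(z_i-z_j) \;-\; \gamma\sum_i\psi_0^2(z_i) \;+\; \frac{N\gamma}{2}\int\psi_0^4,
\eeq
whose expectation on the product state $\psi_0^{\otimes N}$ equals $-(\gamma/2)\int\psi_0^4$, reflecting the near-cancellation of the Hartree potential with half of the mean-field pair energy. The key claim is the operator-type bound
\beq\label{keycomp}
H \;\geq\; \sum_{i=1}^N h_i \;-\; ({\rm const.})\, N^{2/3} e_0 \min\{\gamma^{1/2},\gamma\}
\eeq
on the bosonic $N$-particle Hilbert space. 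This is a state-uniform version of the lower bound in \eqref{gse}: combined with $\sum_i h_i \geq N e_0$ it reproduces $E^{\mathrm{QM}}_0 \geq N e_0 \bigl(1 - ({\rm const.}) N^{-1/3}\min\{\gamma^{1/2},\gamma\}\bigr)$, and the typical GP-regime strategy — replacing the contact interaction by a smeared mean-field contribution and controlling the remainder via integration by parts, Temple-type inequalities, or a localization argument — produces precisely such an operator bound rather than only an infimum bound.

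Combining these two ingredients with the easy upper bound $E^{\mathrm{QM}}_0 \leq N e_0$ (take $\psi_0^{\otimes N}$ as trial state, or simply invoke the left inequality of \eqref{gse}) gives the chain
\beq
N e_0 \;\geq\; E^{\mathrm{QM}}_0 \;=\; \langle \Psi_0, H \Psi_0\rangle \;\geq\; N e_0 + (e_1-e_0)(N - N_0) \;-\; ({\rm const.})\, N^{2/3} e_0 \min\{\gamma^{1/2},\gamma\},
\eeq
which rearranges to \eqref{depletion} after dividing by $N(e_1 - e_0)$. The main obstacle is therefore \eqref{keycomp}: the proof of Theorem 2.1 must be re-examined to confirm that its lower bound arises from an operator (state-by-state) comparison of $H$ with $\sum_i h_i$. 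Should that proof be presented only at the level of the infimum $E^{\mathrm{QM}}_0$, one would redo the argument keeping a generic symmetric $\Psi$ throughout — a routine but necessary modification of the standard GP-regime techniques, and the only place where anything beyond spectral calculus on the one-body Hamiltonian is required.
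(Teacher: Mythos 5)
Your proposal is correct and follows essentially the same route as the paper: the paper's bound \eqref{basicbound} is precisely the operator comparison $H\geq\sum_i h^{(i)}-({\rm const.})\,N^{2/3}e_0\min\{\gamma^{1/2},\gamma\}$ you isolate as the key claim (with the minor caveat that a residual $-\tfrac{\eps}{2}\sum_i p_i^2$ term survives in \eqref{basicbound} and is absorbed by estimating its ground-state expectation by $Ne_0$), and the conclusion is drawn exactly as you do, by tracing against the reduced density matrix, using the spectral gap of $h$, and comparing with the upper bound $E_0^{\rm QM}\leq Ne_0$.
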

\smallskip 

Theorem~\ref{thm2} implies in particular complete BEC in the limit
$N\to\infty$ with $\gamma$ and $V$ fixed. In fact, Eq.\ \eqref{depletion}
implies that the reduced one-particle density matrix divided by $N$
converges in trace norm to the projector on $\psi_0$.  The energy gap
$e_1-e_0$ is always strictly positive since $h$ has a unique ground
state. Its dependence on $\gamma$ and $V$ is discussed in the
Appendix. For the random external potentials introduced in
\eqref{randompot} below, Eq.\ \eqref{gap1} implies that complete BEC
holds not only for almost every random sample of scatterers, but the
depletion of the condensate goes to zero even in $L^p$ norm on the
probability space for every $p<\infty$.

Our proof of BEC in the GP limit is simpler than the corresponding
proof in three and two dimensions \cite{LS} because the one-dimensional case considered
here corresponds to a high-density, mean-field limit. In contrast, the
work in \cite{LS} deals with a dilute limit that requires quite
different tools.

\medskip

\subsection{Proof of Theorems 2.1 and 2.2}

The upper bound in \eqref{gse} is immediately obtained by using $\psi_0^{\otimes N}$ as a trial function for the expectation 
value of the Hamiltonian \eqref{ham}.

For the proof of the lower bound in \eqref{gse}, as well as the proof of  \eqref{depletion}, we begin by splitting off part of the kinetic energy 
and write, with $\eps>0$ and $p_i:=-{\mathrm i}\partial_{z_i}$,
\beq
H=\sum_{i=1}^N\left\{\left(1-\frac{N-1}{2N} \eps\right)p_i^2+V(z_i)\right\}+\frac 1N\sum_{i<j}\left[\frac \eps 2(p_i^2+p_j^2)+
\gamma\delta(z_i-z_j)\right].
\eeq
Next we estimate the term in square brackets  from below by a regular interaction potential using the bound
\beq \label{wbound}-\eps \partial_z^2+\gamma\delta(z)\geq w(z)\eeq
with
\beq w(z):=\frac {\gamma}{1+(b\gamma/2\eps)}\,\frac 1{2b}\,{\exp}(-|z|/b)
\eeq
for any $b>0$, cf. \cite[Lemma~6.3]{BSY}. Note that
\beq \delta_b(z):=\frac 1{2b}\,{\exp}(-|z|/b)\to \delta(z)\eeq
as $b\to 0$, and $\int \delta_b(z)dz=1$ for all $b$.
From \eqref{wbound} we obtain the following inequality (in the sense of quadratic forms on wave functions that satisfy Dirichlet 
boundary conditions at the boundary of $[0,1]^{ N}$ and are identically zero outside of $[0,1]^{ N}$)
\beq
H\geq\sum_{i=1}^N\left\{\left(1-\frac\eps 2\right)p_i^2+V(z_i)\right\}+\frac {\tilde\gamma}N\sum_{i<j}\delta_b(z_i-z_j)
\eeq
where we have denoted $\gamma/[1+(b\gamma/2\eps)]$ by $\tilde\gamma$ for short, and bounded $-(N-1)\eps/2N$ from below by 
$-\eps/2$. 
Since $\delta_b$ is of positive type  we can, for any density $\rho(z)$, bound the last term by a sum of one-body terms:
\beq\label{deltabbound}
\frac {\tilde\gamma}N\sum_{i<j}\delta_b(z_i-z_j)\geq \sum_{i=1}^N\left\{\tilde\gamma\,\delta_b*\rho(z_i)-
\frac {\gamma}2\int(\delta_b*\rho)\rho-\frac \gamma{2N}\delta_b(0)\right\}
\eeq
where we have used that $\gamma\geq\tilde\gamma$. We take $\rho$ to be the GP density $\psi_0^2$ (and zero outside of $[0,1]$).
For the comparison of the mean field Hamiltonian \eqref{mfh} with the right side of \eqref{deltabbound} we use the following 
simple estimates (recall that all integrals are effectively over the unit interval):
\beq\Vert \delta_b*\rho\Vert_\infty\leq\Vert\psi_0\Vert^2_\infty\leq \Vert \psi_0\Vert_2 \Vert \psi_0'\Vert_2\leq  e_0^{1/2}
\eeq
and
\begin{multline}\label{simplebound2}\Vert \delta_b*\rho-\rho\Vert_\infty=\sup_x\left|\int[\rho(x-y)-\rho(x)]\delta_b(y)dy\right|
\leq\sup_x\left|\int\int_{x-y}^x|\rho'(u)|du\, \delta_b(y)dy\right|\\ 
\leq 2\Vert\psi_0\Vert_\infty\, e_0^{1/2}\int|y|^{1/2}\delta_b(y)dy\leq 2^{1/2}\pi^{1/2}\,e_0^{3/4}\, b^{1/2}.
\end{multline}
Using $\tilde\gamma\geq \gamma-b\gamma^2/(2\eps)$, Eqs.\ \eqref{deltabbound}--\eqref{simplebound2} and \eqref{mfh} lead to
\beq\label{basicbound} H\geq \sum_{i=1}^N \left\{h^{(i)}-\frac\eps 2 p_i^2-\frac {b\gamma^2}{2\eps}e_0^{1/2}-
\frac{3\pi^{1/2}}{2^{1/2}}\gamma\, e_0^{3/4} b^{1/2}-\frac\gamma{4Nb}
\right\}.\eeq
In the second term we estimate $p_i^2$ by $e_0$ leading to an optimal $\eps\sim b^{1/2}\gamma e_0^{-1/4}$. Optimizing the resulting expression
\beq ({\rm const.})\gamma\left[b^{1/2} e_0^{3/4}+ (Nb)^{-1}\right]\eeq
over $b$ gives $b\sim N^{-2/3}e_0^{-1/2}$ and
\beq \label{gammalargeest}\frac{E_0^{\rm QM}}N\geq e_0\left(1-({\rm const.})N^{-1/3}\gamma e_0^{-1/2}\right).
 \eeq
For large $\gamma$ we may use  the bound $e_0\geq (\gamma/2)\int \rho^2\geq (\gamma/2)$
while for $\gamma \lesssim1$ we use that  $e_0\geq \pi ^2$ since $\pi ^2$ is the lowest eigenvalue of $-\partial_z^2$ on the 
unit interval with Dirichlet boundary conditions. Altogether we obtain  \eqref{gse}.

To prove Theorem~\ref{thm2} we trace the one-particle operator on the right side of \eqref{basicbound} with the one-particle reduced 
density matrix of the ground state  of $H$ and obtain the more precise lower bound
\beq \frac{E_0^{\rm QM}}N\geq \frac{N_0}N e_0+\left(1-\frac{N_0}N\right)e_1-({\rm const.})\,e_0\, N^{-1/3}\min\{\gamma^{1/2},\gamma\}\eeq
where $e_1$ is the energy of the first excited state of the mean field Hamiltonian $h$. Combined with the upper bound
${E_0^{\rm QM}}\leq N e_0$ this leads to the claimed estimate \eqref{depletion} for the depletion of the condensate.\medskip

\subsection{Remarks}
1. If $N_{\leq k}$ denotes the occupation of the $k$ lowest
eigenvalues of $h$ with energies up to $e_k$, then the following
generalization of \eqref{depletion} holds: \beq\label{depletion2}
\left(1-\frac {N_{\leq k}}N\right)\leq C \frac {e_0}{e_k-e_0}
N^{-1/3}\min\{\gamma^{1/2},\gamma\}.  \eeq The proof is the same
way as for \eqref{depletion}, using again the bound
\eqref{basicbound}. For {\it finite} $N$ the bounds \eqref{depletion}
and \eqref{depletion2} are, of course, only useful if their respective
right-hand sides are less than 1. This may hold for \eqref{depletion2}
even if it does not for \eqref{depletion}.

2. As remarked in the Introduction, the authors of \cite{luganetal1}
claim a transition from a BEC to a fragmented condensate and finally
to a Lifschitz glass phase in their model. Since no rigorous limit of
large particle numbers is considered in \cite{luganetal1} there is no
contradiction with the complete BEC that is proved in the present
paper in the whole parameter range. In fact, for a given {\it finite}
particle number $N$ a fragmented condensate, i.e., $(1-N_{\leq k}/N)$
small for some $1\ll k\ll N$, may be a reasonable substitute for the
fully condensed state that emerges according to \eqref{depletion} in
the large $N$ limit. We also point out that as far as the density in
the position variable is concerned, a fragmented condensate with
non-overlapping single-particle wave functions is indistinguishable
from a fully condensed state where the wave function of the
condensate is a coherent superposition of the non-overlapping
functions. The difference shows up, however, in the momentum
distribution.

\section{The Gross-Pitaevskii Theory}

Having established the GP minimizer as the wave function of the
condensate in the Gross-Pitaevskii limit of the many-body problem we
now turn to the study of the dependence of this minimizer on a random
external potential. Specifically, we shall take $V$ in \eqref{gpfunc}
to be \beq\label{randompot} V(z)=\sigma V_\omega(z)\eeq with
$\sigma\geq 0$ and
\begin{equation}\label{defvo}
V_\omega(z) := \sum_i \delta(z-z_i/\nu)
\end{equation}
where $\nu>0$ and $\omega$ is a variable in a probability space
generating the random points $z_i$ in $\mathbb{R}$ which are assumed
to be Poisson distributed with density $\lambda=1$ (see Appendix 2).
The distances $\ell_i$ between neighboring points $z_i/\nu$ and
$z_{i+1} /\nu$ are then independent random variables distributed
according  to the exponential law (cf.\ Eq.\eqref{marginals})
\begin{equation}\label{exp:law}
dp_\nu(\ell) =  \nu e^{-\ell\nu} d\ell\,.
\end{equation}
Thus, with probability one only finitely many of these points are in $[0,1]$, and on average there are $\nu$ such points. 
We start by considering the energy in an interval of length $\ell$ between two of the points, which after a translation 
and scaling
\beq
z\to x:=z/\ell, \quad \sigma\to \alpha:=\ell \sigma  , \quad \gamma\to \kappa:=\ell\gamma
\eeq
can be conveniently taken to be the unit interval.

\subsection{An Auxiliary Problem}\label{ss:aux}

For $\kappa\geq 0$ and $\alpha\geq 0$, let $e(\kappa,\alpha)$ denote the auxiliary GP energy
\begin{equation}\label{GP-energ}
e(\kappa,\alpha) := \inf_{\|\phi\|_{2} =1} \E_{\kappa,\alpha}[\phi] \ ,
\end{equation}
where
\begin{equation}
\E_{\kappa,\alpha}[\phi]:= \int_0^1 dx \ \left( |\phi'(x)|^2 + \frac \kappa 2 |\phi(x)|^4\right) +
\frac \alpha 2 \left( |\phi(0)|^2 + |\phi(1)|^2 \right)  \label{GP-funct}
\end{equation}
for $\phi \in H^1([0,1])$. Standard methods show that there exists a minimizer for (\ref{GP-energ}), which we denote by
${\phi}_{\kappa,\alpha}$, i.e.,
\begin{equation}\label{funct-min}
e(\kappa,\alpha) = \E_{\kappa,\alpha}[{\phi}_{\kappa,\alpha}] \ .
\end{equation}
The minimizer is unique up to a constant phase factor, which can be chosen such that $\phi_{\kappa,\alpha}$ is non-negative.

Note that, as an infimum over linear functions, $e(\kappa,\alpha)$ is jointly concave in $\kappa$ and $\alpha$.
For us it will also be important that $\kappa\mapsto \kappa\, e(\kappa,\alpha)$ is strictly convex. This follows from the fact that
\begin{equation}
\kappa\, e(\kappa,\alpha) = \inf \left\{ \E_{1,\alpha}(|\phi|) \, : \, \int_0^1 |\phi|^2 = \kappa\right\}
\end{equation}
together with the strict convexity of $|\phi|^2 \mapsto \E_{1,\alpha}(|\phi|)$.

We have the simple bounds
\begin{equation}\label{Estim-E}
e(\kappa,\alpha) = \E_{0,\alpha}(\phi_{\kappa,\alpha}) + \frac \kappa 2 \int_0^1 dx \ |\phi_{\kappa,\alpha}(x)|^4 \geq  e(0,\alpha) + \frac \kappa 2
\end{equation}
and
\begin{equation}\label{Estim-E2}
e(\kappa,\alpha) \leq \E_{\kappa,\alpha}(\phi_{0,\alpha}) = e(0,\alpha) + \frac \kappa 2  \int_0^1 dx \ |\phi_{0,\alpha}(x)|^4\ ,
\end{equation}
where we used the Schwarz inequality to obtain the first inequality.

Obviously $e(0,0)=0$ and $e(0,\infty)=\pi^2$. The minimizer
$\phi_{0,\alpha}$ for $\kappa=0$ is of the form $\phi_{0,\alpha}(x)=
a_{\alpha} \cos [b_{\alpha} (x -1/2)]$, where $b_\alpha\geq 0$
increases from $0$ to $\pi$ as $\alpha$ increases from $0$ to
$\infty$, and $a_{\alpha}>0$ is determined by the normalization
condition $\|\phi_{0,\alpha}\|_2 =1$. A close inspection shows that
$\|\phi_{0,\alpha}\|_4$ is monotone increasing in $\alpha$, and hence
\begin{equation}\label{E3}
\int_0^1 dx \ |\phi_{0,\alpha}(x)|^4 \leq \int_0^1 dx \ |\phi_{0,\infty}(x)|^4 = \frac 32\ .
\end{equation}
In particular, from (\ref{Estim-E})--(\ref{E3}) we obtain
\begin{equation}\label{32c}
\frac 12 \leq \frac{e(\kappa,\alpha)-e(0,\alpha)}{\kappa}\leq \frac  34
\end{equation}
for all $\kappa>0$ and $\alpha\geq 0$.

\begin{Lemma}\label{lem1}
{\it For some constant $C$ independent of $\kappa$ and $\alpha$ we have
\begin{equation}\label{tos}
e(\kappa,\infty) \geq e(\kappa,\alpha) \geq e(\kappa,\infty) \left( 1 - C \alpha^{-1/2} \right) \ .
\end{equation}}
\end{Lemma}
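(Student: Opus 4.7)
The first inequality is monotonicity: for each non-negative $\phi\in H^1([0,1])$ the functional $\alpha\mapsto\E_{\kappa,\alpha}[\phi]$ is non-decreasing, so its infimum $e(\kappa,\alpha)$ is non-decreasing in $\alpha$ as well.

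For the nontrivial lower bound, my plan is to build a Dirichlet competitor for $e(\kappa,\infty)$ out of the Robin minimizer $\phi:=\phi_{\kappa,\alpha}$ by squeezing and linear tenting. With $a:=\phi(0)$, $b:=\phi(1)$ and a parameter $\delta\in(0,1/4)$, set
\[
\tilde\phi(x):=\begin{cases} ax/\delta, & 0\le x\le\delta,\\ \phi\!\bigl((x-\delta)/(1-2\delta)\bigr), & \delta\le x\le 1-\delta,\\ b(1-x)/\delta, & 1-\delta\le x\le 1,\end{cases}
\]
which lies in $H_0^1([0,1])$ and is therefore admissible for $e(\kappa,\infty)$. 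An elementary change of variables gives
\[
\|\tilde\phi\|_2^2=(1-2\delta)+\tfrac{\delta}{3}(a^2+b^2),\quad \|\tilde\phi'\|_2^2=\tfrac{a^2+b^2}{\delta}+\tfrac{\|\phi'\|_2^2}{1-2\delta},\quad \|\tilde\phi\|_4^4=(1-2\delta)\|\phi\|_4^4+\tfrac{\delta}{5}(a^4+b^4).
\]

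The decisive algebraic fact is that taking $\delta:=2/\alpha$ forces the tent kinetic cost $(a^2+b^2)/\delta$ to coincide exactly with the Robin boundary penalty $\tfrac{\alpha}{2}(a^2+b^2)$ that contributes to $e(\kappa,\alpha)=\|\phi'\|_2^2+\tfrac{\kappa}{2}\|\phi\|_4^4+\tfrac{\alpha}{2}(a^2+b^2)$, so these two terms cancel when one compares $\E_{\kappa,\infty}[\tilde\phi]$ to $e(\kappa,\alpha)$. Three residuals remain: a rescaling factor $[(1-2\delta)^{-1}-1]\|\phi'\|_2^2\le 4\delta\,e(\kappa,\alpha)$, a nonlinear tent term $\tfrac{\kappa\delta}{5}(a^4+b^4)$, and a renormalization correction arising from $\|\tilde\phi\|_2^{-4}=1+O(1/\alpha)$. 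Using the Robin a priori bound $a^2+b^2\le 2e(\kappa,\alpha)/\alpha$, each residual is $O(e(\kappa,\infty)\alpha^{-1/2})$, and rearranging the resulting inequality $e(\kappa,\infty)\le e(\kappa,\alpha)(1+C\alpha^{-1/2})$ yields the second inequality of the lemma.

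The main obstacle is verifying the uniformity of $C$ in the regime of very large $\kappa$. The crude estimate $a^4+b^4\le(a^2+b^2)^2\le 4e(\kappa,\alpha)^2/\alpha^2$ only shows the nonlinear tent residual to be $O(e(\kappa,\infty)\alpha^{-1/2})$ when $\kappa\lesssim\alpha^{5/4}$. For $\kappa\gtrsim\alpha^{5/4}$, I would instead exploit the Thomas--Fermi structure of the minimizer, whose boundary layer has width $\sim\kappa^{-1/2}$, to construct a finer competitor (squeezing on the healing length $\kappa^{-1/2}$ rather than $\alpha^{-1}$) and obtain an absolute boundary-layer error of order $\sqrt\kappa/\alpha$ uniform in $\kappa$; since $e(\kappa,\infty)\ge\kappa/2$ by \eqref{Estim-E}, the resulting relative error is $O(\alpha^{-1}\kappa^{-1/2})$, which is controlled by $C\alpha^{-1/2}$ in this regime because $\alpha\lesssim\kappa^{4/5}\le\kappa$.
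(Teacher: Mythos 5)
Your tent-and-squeeze construction with $\delta=2/\alpha$ is a legitimate variant of the paper's argument, and your own accounting correctly locates its limit: with only the a~priori bound $a^2+b^2\le 2e(\kappa,\alpha)/\alpha$ and $e(\kappa,\alpha)\sim\kappa$, the quartic tent residual $\kappa\delta(a^4+b^4)/5$ has relative size of order $\kappa^2/\alpha^3$, so the argument covers only $\kappa\lesssim\alpha^{5/4}$. The genuine gap is in your treatment of $\kappa\gtrsim\alpha^{5/4}$. The Thomas--Fermi plan is not carried out, and the numbers you quote do not follow from the estimates you actually have: once you squeeze on the scale $\delta\sim\kappa^{-1/2}$ the Robin cancellation is lost, the tent kinetic cost is $(a^2+b^2)/\delta\lesssim(\kappa/\alpha)\,\kappa^{1/2}=\kappa^{3/2}/\alpha$ and the quartic tent cost is $\kappa\delta(a^4+b^4)\lesssim\kappa^{5/2}/\alpha^2$; neither is the claimed absolute error $\sqrt\kappa/\alpha$, and neither is $O(\kappa\,\alpha^{-1/2})$ throughout the regime in question. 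Salvaging this route would require genuinely sharper information on the minimizer near the boundary (e.g.\ $\phi_{\kappa,\alpha}(0)^2\le C\kappa^{1/2}/\alpha$, or a uniform bound on $\|\phi_{\kappa,\alpha}\|_\infty$), none of which you establish.

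The irony is that the large-$\kappa$ regime needs no construction at all, and this is exactly how the paper disposes of it: a piecewise linear Dirichlet trial function, constant on $(\epsilon,1-\epsilon)$, gives $e(\kappa,\infty)\le\tfrac\kappa2(1+C\kappa^{-1/2})$ after optimizing over $\epsilon$ (Eq.~\eqref{est1}), while \eqref{Estim-E} gives $e(\kappa,\alpha)\ge\kappa/2$; hence $e(\kappa,\infty)\le e(\kappa,\alpha)(1+C\kappa^{-1/2})\le e(\kappa,\alpha)(1+C\alpha^{-1/2})$ whenever $\kappa\ge\alpha$, which contains your problematic regime for $\alpha\ge 1$ (and for bounded $\alpha$ the lemma is vacuous once $C$ is chosen large enough). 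In the complementary case $\alpha\ge\kappa$ the paper also performs surgery on the Robin minimizer, but by subtracting the boundary value, i.e.\ using $\lambda(\phi_{\kappa,\alpha}-\phi_{\kappa,\alpha}(0))$ as a Dirichlet trial function together with the bound $\phi_{\kappa,\alpha}(0)^2\le C\kappa^{1/2}/\alpha\le C\alpha^{-1/2}$ extracted from the same two energy estimates; your tenting with $\delta=2/\alpha$ is an acceptable substitute there, since $\kappa\le\alpha\le\alpha^{5/4}$. So keep the tent for $\kappa\le\alpha$, but replace the Thomas--Fermi excursion by the one-line comparison of both energies with $\kappa/2$.
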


\begin{proof}
  The first inequality follows clearly from monotonicity of
  $e(\kappa,\alpha)$ in $\alpha$.  To prove the second, it is enough
  to consider $\alpha > C^2$ (for otherwise the right side of
  (\ref{tos}) is negative). Using a piecewise linear function which is
  constant for $x\in(\epsilon,1-\epsilon)$ as a trial function, we obtain (after optimizing over $\epsilon$) that
\begin{equation}\label{est1}
e(\kappa,\infty) \leq   \frac \kappa 2 \left( 1  + C  \kappa^{-1/2} \right)
\end{equation}
for some $C>0$ and $\kappa$ large enough. In particular, since obviously $e(\kappa,\alpha) \geq \kappa/2$, the second inequality 
in (\ref{tos}) follows right away if $\kappa > \alpha$.

We are left with the case $\alpha \geq \kappa$. Since
\begin{equation}
e(\kappa,\infty) \geq e(\kappa,\alpha) \geq \frac \kappa 2 + \alpha \phi_{\kappa,\alpha}(0)^2
\end{equation}
(using the symmetry of $\phi_{\kappa,\alpha}$ with respect to reflections at $1/2$), we conclude from (\ref{est1}) that
\begin{equation}
 \phi_{\kappa,\alpha}(0)^2 \leq C \frac{\kappa^{1/2}}\alpha \leq C \alpha^{-1/2}
\end{equation}
in this case. To obtain the desired bound, we use use $\lambda(
\phi_{\kappa,\alpha}(x) - \phi_{\kappa,\alpha}(0))$ as a trial
function for $e(\kappa,\infty)$, where $\lambda$ is an appropriate
normalization factor. A simple calculation then yields (\ref{tos}).
\end{proof}

For later use we shall also introduce the Legendre transform of the map $\kappa \mapsto \kappa\, e(\kappa, \alpha)$,
\begin{equation}\label{GCE-one-box}
g(\mu,\alpha):= \inf_{n\geq 0} \left(n\, e(n, \alpha) - \mu n \right)\,.
\end{equation}
From the concavity of $\kappa\mapsto e(\kappa,\alpha)$ and the
uniqueness of the minimizer of (\ref{GP-funct}) one can easily
conclude differentiability of the map $\kappa\mapsto
e(\kappa,\alpha)$. Since $\kappa\mapsto \kappa\, e(\kappa,\alpha)$ is
strictly convex, the infimum in (\ref{GCE-one-box}) is uniquely attained. If it
is attained at some $n>0$, this $n$ satisfies the equation
\begin{equation}\label{inf-eq}
e(n , \alpha)  +  n \, e' (n , \alpha)  =
\mu  \ ,
\end{equation}
where we denote $e'(\kappa,\alpha):= \partial_{\kappa}e(\kappa,\alpha)$. We subtract the
energy $e(0, \alpha)$  on both sides of
(\ref{inf-eq}) for convenience. Consequently, we observe that the
infimum in (\ref{GCE-one-box}) is attained at $n$ satisfying
\begin{equation}\label{inf-n}
n = \left[\mu - e(0,\alpha)\right]_{+} \, \frac {1} {e'(n,\alpha) +
\tfrac{1}{n}(e(n,\alpha)-e(0,\alpha))} \ .
\end{equation}
Here $[t]_{+}$ stands for the positive part of a real number $t$. The
unique solution of (\ref{inf-n}) will be denoted by $\bar
n(\mu,\alpha)$.

From (\ref{32c}) and concavity of $e(\kappa,\alpha)$ is follows that
$1/2\leq e'(\kappa,\alpha)\leq 3/4$. Moreover, the second term in the
denominator in (\ref{inf-n}) lies in $[1/2,3/4]$, again by
(\ref{32c}). We thus conclude that the optimizing $n$ satisfies
\begin{equation}\label{bounds:n}
\frac 2{3}\left[\mu - e(0,\alpha)\right]_{+} \leq \bar n(\mu,\alpha)  \leq \left[\mu - e(0,\alpha)\right]_{+}\,.
\end{equation}
In particular, it is non-zero if and only if $\mu > e(0,\alpha)$, and we can write
\begin{equation}\label{inf-n-ord}
\bar n(\mu,\alpha) \sim \left[\mu - e(0,\alpha)\right]_{+} \ ,
\end{equation}
where $a\sim b$ means that $a/b$ is bounded from above and below by positive constants.

Finally, we note that since $e(0,\alpha)$ is increasing and concave in $\alpha$, there exists a constant $C>0$ such that
\begin{equation}\label{e0a}
e(0,\alpha) \geq \frac {C\alpha}{1+\alpha} \,.
\end{equation}
This bound will be useful later.


\subsubsection{The Average Energy}

We shall consider the average energy for a particle distributed over intervals of side length $\ell$, whose  distribution
is given by the exponential law \eqref{exp:law}
with $\nu >0$. More precisely, we define
\begin{equation}\label{altdef}
e_0(\gamma,\nu) = \inf \left\{ \nu \int_0^\infty dp_\nu(\ell) \,\frac{ n(\ell)}{\ell^2} e(n(\ell)\ell\gamma,\infty) \, : \, \nu \int_0^\infty dp_\nu(\ell) \, n(\ell) = 1 \right\}\,,
\end{equation}
where the infimum as over all $n:\mathbb{R}_+\to \mathbb{R}_+$ satisfying the normalization constraint.
With the aid of (\ref{GCE-one-box}), we can alternatively write
\begin{equation}\label{def:e0}
e_0(\gamma,\nu) = \sup_{\mu>0} \left\{ \mu + \nu \int_0^\infty dp_\nu(\ell)\, \frac 1{\ell^3 \gamma} g(\mu \ell^2,\infty) \right\}\,.
\end{equation}
From simple scaling it follows that
\begin{equation}
e_0(\gamma,\nu) = \gamma\, e_0(1, \nu/\sqrt{\gamma})\,.
\end{equation}
The infimum in (\ref{altdef}) is attained for $n(\ell) = (\ell\gamma)^{-1} \bar n(\mu \ell^2,\infty)$ for suitable $\mu>0$, 
being equal to the optimal $\mu$ in (\ref{def:e0}).
Using (\ref{inf-n-ord}) as well as the fact that
\begin{equation}
1\leq e^x \int_x^\infty e^{-t}\left(t-\frac {x^2}{t}\right) dt \leq 2
\end{equation}
we see that $\mu$ satisfies the relation
\begin{equation}\label{rel-1}
 1 \sim \frac\mu\gamma e^{-\pi\nu/\sqrt\mu} \ .
\end{equation}
In other words, $\mu \sim \gamma\, f(\nu^2/\gamma)$, where $f:\mathbb{R}_+\to \mathbb{R}_+$ denotes the function
\begin{equation}\label{def:f}
f(x) = \left\{ \begin{array}{ll} 1 & \text{for $x\leq 1$} \\ \frac x{ \left(1+ \ln x\right)^2} & \text{for $x\geq 1$.} 
\end{array} \right.
\end{equation}
Also $e_0(\gamma,\nu) \sim \gamma\, f(\nu^2/\gamma)$.


\subsection{The Gross-Pitaevskii Energy}

We now turn to the actual GP energy functional we want to consider,
\begin{equation}\label{GP-funct-ell}
\E^{\rm GP}_\omega[\psi]:= \int_{0}^1 \, dz \, \left( |\psi'(z)|^2 + \sigma V_\omega(z) |\psi(z)|^2+ \frac \gamma 2 |\psi(z)|^4\right)
\end{equation}
with $V_\omega$ as in \eqref{defvo}, and define the GP energy as
\begin{equation}\label{GP-energy}
e_\omega(\gamma,\sigma,\nu) := \inf_{\|\phi\|_{2} =1} \E^{\rm GP}_\omega[\phi] \ .
\end{equation}

\begin{Theorem}[Convergence of the energy]\label{thm:gp}
{\it Assume that $\nu\to \infty$, $\sigma\to \infty$ and $\gamma\to \infty$ in such a way that
\begin{equation}\label{asu}
\gamma\gg \frac{\nu}{\left(\ln \nu\right)^2} \quad \text{and} \quad
\sigma \gg \frac{ \nu}{1+ \ln \left( 1+ \nu^2/\gamma\right) }\,.
\end{equation}
Then, for almost every $\omega$,
\begin{equation}\label{thm:mainresult}
\lim \frac{ e_\omega(\gamma,\sigma,\nu)}{e_0(\gamma,\nu)} = 1
\end{equation}
where $e_0(\gamma,\nu)$ is the {\em deterministic} function defined in (\ref{def:e0}), which satisfies $e_0(\gamma,\nu) = 
\gamma \, e_0(1,\nu/\sqrt{\gamma}) \sim \gamma\, f(\nu^2/\gamma)$.}
\end{Theorem}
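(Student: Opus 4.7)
The plan is to prove matching asymptotic upper and lower bounds $e_\omega(\gamma,\sigma,\nu)=(1+o(1))\,e_0(\gamma,\nu)$ holding almost surely in $\omega$, by localizing the variational problem to the gaps $I_i=[a_{i-1},a_i]$ between consecutive scatterers $0=a_0<a_1<\dots<a_{N_\omega}<a_{N_\omega+1}=1$, with lengths $\ell_i:=a_i-a_{i-1}$. On each gap the problem is exactly the one-box question of Subsection~\ref{ss:aux}, and the Legendre-transform / grand-canonical reformulation built from $g(\mu,\alpha)$ in (\ref{GCE-one-box}) decouples the contributions of different gaps, so that the convergence to the deterministic $e_0(\gamma,\nu)$ follows from a law-of-large-numbers argument for the Poisson point process.

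For the upper bound I would use, given nonnegative numbers $\{n_i\}$ with $\sum_i n_i=1$, the trial function
\[ \psi(z) := \sqrt{n_i/\ell_i}\,\phi_{\kappa_i,\infty}\!\bigl((z-a_{i-1})/\ell_i\bigr)\quad\text{on } I_i,\qquad \kappa_i:=n_i\ell_i\gamma, \]
built from the $\alpha=\infty$ (Dirichlet) minimizers of Subsection~\ref{ss:aux}. Since $\phi_{\kappa,\infty}$ vanishes at $0$ and $1$, the trial function vanishes at every scatterer and the $\sigma V_\omega$-term drops out entirely; rescaling each gap to the unit interval gives $\E_\omega^{\rm GP}[\psi]=\sum_i(n_i/\ell_i^2)\,e(\kappa_i,\infty)$. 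Optimizing over $\{n_i\}$ via a Lagrange multiplier $\mu$ and using the identity (\ref{GCE-one-box}) then yields
\[ e_\omega(\gamma,\sigma,\nu)\;\leq\;\sup_{\mu\geq 0}\Bigl\{\mu+\sum_{i=1}^{N_\omega+1}\frac{g(\mu\ell_i^2,\infty)}{\ell_i^3\,\gamma}\Bigr\}. \]

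For the lower bound, starting from an arbitrary normalized $\psi$ I set $n_i:=\int_{I_i}|\psi|^2$, split the contribution $\sigma|\psi(a_j)|^2$ of each interior scatterer equally between its two neighbouring gaps, and restrict $\psi$ to each $I_i$. After rescaling to $[0,1]$, the definition of $e(\kappa,\alpha)$ gives the gap-by-gap bound
\[ \E_\omega^{\rm GP}[\psi]\;\geq\;\sum_{i=1}^{N_\omega+1}\frac{n_i}{\ell_i^2}\,e\bigl(n_i\ell_i\gamma,\,\ell_i\sigma\bigr), \]
the two end gaps being handled by the Dirichlet condition at $z=0,1$ (which only lowers the reference energy). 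Lemma~\ref{lem1} upgrades this to $e(\kappa_i,\ell_i\sigma)\geq e(\kappa_i,\infty)\bigl(1-C(\ell_i\sigma)^{-1/2}\bigr)$ whenever $\ell_i\sigma\gg 1$; the anomalously short gaps, on which the estimate is useless, carry negligible mass because $g(\mu\ell^2,\infty)$ vanishes for $\mu\ell^2<\pi^2=e(0,\infty)$, so only gaps with $\ell>\pi/\sqrt{\mu}$ are occupied, and for those $\ell\sigma\gg 1$ is guaranteed by the hypothesis (\ref{asu}) on $\sigma$. Passing once more to the Legendre dual produces a lower bound of the same form as the upper bound, multiplied by $(1+o(1))$.

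The remaining task is probabilistic: I must show that for $\mu$ near the optimizer,
\[ \frac{1}{\nu}\sum_{i=1}^{N_\omega+1}\frac{g(\mu\ell_i^2,\infty)}{\ell_i^3\gamma}\;\xrightarrow[\nu\to\infty]{\text{a.s.}}\;\int_0^\infty dp_\nu(\ell)\,\frac{g(\mu\ell^2,\infty)}{\ell^3\gamma}, \]
and that the convergence is uniform enough in a neighbourhood of the optimizer $\mu^\star\sim\gamma f(\nu^2/\gamma)$ to commute the supremum with the almost-sure limit. Since the integrand is supported on $\ell>\pi/\sqrt{\mu^\star}$ and the gap lengths are, conditionally on $N_\omega\sim\nu$, approximately i.i.d.\ $\mathrm{Exp}(\nu)$, standard strong laws combined with the Poisson tail bounds of Appendix~2 suffice. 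The main obstacle is exactly this concentration step together with the error control above: the two hypotheses in (\ref{asu}) are used here — the $\sigma$-condition forces the Lemma~\ref{lem1}-error $(\ell_i\sigma)^{-1/2}$ to be small on every occupied gap, while the $\gamma$-condition places $\mu^\star$, and hence the density $\sim\nu e^{-\pi\nu/\sqrt{\mu^\star}}$ of occupied gaps, in the regime where the law of large numbers has enough samples to dominate the fluctuations of the summand.
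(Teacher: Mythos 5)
Your proposal follows essentially the same route as the paper: the gap-by-gap decomposition with the rescaled one-box energies $e(\kappa,\alpha)$, Dirichlet ($\alpha=\infty$) trial functions giving the upper bound, Lemma~\ref{lem1} plus the $\sigma$-hypothesis to upgrade the lower bound on the occupied gaps, the Legendre dual $g(\mu,\alpha)$ to decouple the constraint, and a second-moment/law-of-large-numbers argument (the paper's Lemmas on $\langle N\rangle,\langle N^2\rangle$ and $\langle E\rangle,\langle E^2\rangle$) for the almost-sure concentration. The outline is correct and matches the paper's proof in all essential steps.
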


\subsubsection{Remark.} Without the interaction, i.e., for $\gamma=0$,
 the energy is not deterministic but is simply
the ground state energy in the random potential $\sigma V_\omega$. For
$\sigma\to\infty$ it becomes equal to the kinetic energy of
a particle localized in the largest subinterval free of obstacles and
hence, with probability 1, of the order $\nu^2/(\ln \nu)^2$, cf.\
Appendix 2.  \medskip

We shall now investigate the optimal particle number distribution in
(\ref{altdef}). Since $\bar n(\mu\ell^2,\infty)>0$ if and only
if $\mu \ell^2 > \pi^2$, the average number of intervals with non-zero
occupation numbers is given by
\begin{equation}\label{mean-box}
\nu \int_{\pi/\sqrt\mu}^\infty dp_\nu(\ell) = e^{-\pi\nu/\sqrt\mu} \, \nu =: \lambda\, \nu \ .
\end{equation}
Since $\nu$ is the total number of available intervals,
$\lambda \leq 1$ in (\ref{mean-box}) defines the \textit{fraction} of
them which are occupied.
With the help of the definition of $\lambda$ in (\ref{mean-box}) the relation (\ref{rel-1}) can also be written as
\begin{equation}\label{rel-2}
\frac{\pi^2\nu^2}{\left(\ln \lambda^{-1}\right)^2}= \mu \sim \frac\gamma\lambda  \ .
\end{equation}
In other words, $\lambda$ is determined by $\gamma$ and $\nu$ via the relation
\begin{equation}\label{rel-3}
\gamma \sim \frac{\lambda \ \nu^2}{\left(\ln\lambda^{-1}\right)^2} \ .
\end{equation}
We can distinguish the following {\bf limiting cases}:\medskip

\begin{itemize}
\item If $\gamma\gg \nu^2$ then by (\ref{rel-3}) we get $\lambda\to
  1$, i.e., {\em all} the intervals are occupied. The chemical potential
  satisfies $\mu \sim \gamma$ in this regime.\medskip
\item
If $\gamma\sim \nu^2$ than $\lambda \sim 1$, but $\lambda$ is strictly less than $1$. Again we have $\mu \sim \gamma$. \medskip
\item
If $\gamma\ll \nu^2$ then $\lambda \ll 1$, i.e., only a small fraction of the intervals are occupied. The relation (\ref{rel-2}) 
implies $\mu \sim (\nu/\ln(\nu^2/\gamma))^2$ for the chemical potential.\medskip
\item
If $\gamma\sim \nu/(\ln\nu)^2$ then by (\ref{rel-3}) the fraction $\lambda$ becomes $O(1/\nu)$, i.e.,
only \textit{finitely many} intervals are occupied. In this latter case, $\mu\sim \gamma\nu \sim \nu^2/(\ln\nu)^2$,
which corresponds exactly to the inverse of the square of the size of the \textit{largest} interval, cf.\ Appendix 2.
\medskip
\end{itemize}
\medskip

In particular, $\lambda \gg 1/\nu$ only if $\gamma \gg \nu/(\ln \nu)^2$, and hence this condition guarantees that many 
intervals are occupied. In this case the law of large numbers applies and hence the energy becomes deterministic in the 
limit. If $\lambda \nu = O(1)$, on the other hand, the value of $e_\omega(\gamma,\sigma,\nu)$ 
is random. This shows, in particular, that our condition (\ref{asu}) on $\gamma$ is optimal.

Also the second condition in (\ref{asu}) on $\sigma$ can be expected to be optimal. It can be rephrased as 
$\bar \ell \sigma \gg 1$, where $\bar \ell$ is the (weighted) average interval length
\begin{equation}
\bar \ell = \nu \int_0^\infty dp_\nu(\ell) \, \ell\, n(\ell)
\end{equation}
with $n(\ell) = (\ell\gamma)^{-1} \bar n(\mu \ell^2,\infty)$  the optimizer in (\ref{altdef}). A simple calculation shows 
that $\bar \ell \sim \nu^{-1} (1+\ln(1+\nu^2/\gamma))$.

\subsubsection{Remark} The conclusions above apply to the minimizing particle distribution in
(\ref{altdef}). In the parameter regime defined in (\ref{asu}), they
can be shown to apply also to the actual minimizer of the GP
functional (\ref{GP-funct-ell}). This follows in a standard way by
perturbing the energy functional, and then taking a derivative with
respect to the perturbation. Specifically, one can add a term $\beta
\int_\R f(|\psi(x)|) dx $ for suitable functions $f$ to the GP
functional, and change the definition of $e_0$ in (\ref{altdef})
accordingly. One then concludes that the energy asymptotics
(\ref{thm:mainresult}) continues to hold in this perturbed
case. Taking a derivative with respect to $\beta$ at $\beta=0$ yields
the desired information on the GP minimizer.

\bigskip

\subsection{Proof of Theorem~\ref{thm:gp}}
\subsubsection{Preliminaries}
Let $z_1\leq z_2\leq \dots \leq z_m$ denote those points in
(\ref{defvo}) which lie in the open interval $(0,\nu)$. Let also $z_0
= 0$ and $z_{m+1}=\nu$, and denote $\ell_i=|z_{i+1}-z_{i}|/\nu$ for $0\leq i
\leq m$.  Then $\sum_{j=0}^m \ell_j = 1$. For any $\psi\in
H_0^1([0,1])$, we have
\begin{equation}\label{splitting}
\E^{\rm GP}[\psi] = \sum_{j=0}^m  \frac{n_j}{\ell_j^2}  \E_{n_j \ell_j \gamma,\ell_j\sigma}[\psi_j]
\end{equation}
with
\begin{equation}
n_j = \int_{z_j}^{z_{j+1}} dx \,|\psi(x)|^2
\end{equation}
and
\begin{equation}
\psi_j (x) = \sqrt{\frac{\ell_j}{n_j}} \psi\left( z_j + \ell_j x \right) \quad \text{for $x\in [0,1]$.}
\end{equation}

We can choose $\psi\in H_0^1([0,1])$ such that $\psi_j \propto \phi_{\ell_j \gamma,\infty}$ for all $j$ and $\{n_j\}$ is an arbitrary collection of non-negative numbers adding up to $1$. In particular, we obtain the upper bound
\begin{equation}\label{upper}
e_\omega(\gamma,\sigma,\nu) \leq \inf_{\{n_j\}} \sum_{j=0}^m \frac{n_j}{\ell_j^2} e(n_j \ell_j \gamma,\infty)\ ,
\end{equation}
where the infimum is over all collections of non-negative numbers $\{n_0,n_1,\dots,n_m\}$ with $\sum_{j=0}^m n_j = 1$.

From (\ref{splitting}) we also have the lower bound
\begin{equation}\label{below}
e_\omega(\gamma,\sigma,\nu) \geq \inf_{\{n_j\}} \sum_{j=0}^m \frac{n_j}{\ell_j^{2}} e(n_j \ell_j \gamma,\ell_j \sigma)\ .
\end{equation}
From Lemma~\ref{lem1} we conclude that the two bounds (\ref{below})
and (\ref{upper}) agree, to leading order, as long as $\ell_j \sigma$
is large for all the intervals of length $\ell_j$. In particular, the
study of the GP energy reduces to a study of independent intervals in this
case. More precisely, it suffices that $\ell_j \sigma$ be large for those intervals containing most of the particles.


\subsubsection{Poisson Distribution}

In this subsection we shall collect a few useful facts about the Poisson distribution, we will be used later on. A more detailed 
discussion of some of the points   can be found in Appendix 2.
 Let $\chi_m(\omega)$ denote the characteristic function of the set corresponding to having exactly $m$ scatterers inside the 
 interval $(0,\nu)$. Its expectation value equals
\begin{equation}\label{ex:m}
\langle \chi_m \rangle = e^{-\nu} \frac{\nu^m}{m!} \,.
\end{equation}
We shall frequently use the bounds
\begin{equation}\label{p:u}
e^{-\nu} \sum_{m\geq \lambda\nu} \frac{\nu^m}{m!} \leq e^{-\nu} \sum_{m\geq 0} \frac{\nu^m}{m!} \lambda^{m-\lambda\nu} = 
e^{-\nu \left(1-\lambda + \lambda \ln \lambda\right)}   \quad \text{for $\lambda\geq 1$}
\end{equation}
and
\begin{equation}\label{p:l}
e^{-\nu} \sum_{m\leq \lambda\nu} \frac{\nu^m}{m!} \leq e^{-\nu} \sum_{m\geq 0} \frac{\nu^m}{m!} \lambda^{m-\lambda\nu} = e^{-\nu \left(1-\lambda + \lambda \ln \lambda\right)}   \quad \text{for $\lambda\leq 1$.}
\end{equation}
Note that $1-\lambda+\lambda\ln\lambda \geq 0$ for $\lambda\geq 0$, with equality only for $\lambda = 1$.

For any function depending only on $\ell_j$ for some fixed $1\leq j\leq m-1$, we have
\begin{equation}\label{dpl}
\left\langle f(\ell_j) \right \rangle = \int_0^\infty dp_\nu(\ell)\, f(\ell)
\end{equation}
with $dp_\nu$ defined in (\ref{exp:law}). This fact (cf.\ Eq.\ \eqref{marginals} in Appendix 2) will be used repeatedly below. Moreover, the variables  $\ell_j$ and $\ell_k$ for $k\neq j$ are independent, and hence $\langle f_1(\ell_j) f_2(\ell_k)\rangle = \langle f_1(\ell_j)\rangle\langle f_2(\ell_k)\rangle$ in this case.


\subsubsection{Upper Bound}\label{ss:ub}

Let $\mu=\mu(\gamma,\nu)$ be the optimizer in (\ref{def:e0}).
With $\bar n(\mu,\alpha)$ defined above, we then have
\begin{equation}\label{choice:mu}
 \nu \int_0^\infty dp_\nu(\ell)\, \frac 1{\ell \gamma} \bar n (\mu \ell^2,\infty) = 1\,.
\end{equation}
In every interval of length $\ell_j$, $1\leq j\leq m-1$,  we shall place $n_j = (\ell_j \gamma)^{-1} \bar n (\mu\ell_j^2,\infty)$ particles. For simplicity, we shall not place 
any particles in the first and last interval.  The total number of particles in the system is then $N = \sum_{j=1}^{m-1} n_j$.

\begin{Lemma}\label{lem:1}
If $\nu\to \infty$ and $\gamma\to\infty$ with $\gamma\gg \nu/(\ln \nu)^2$ then
\begin{equation}
\lim N = 1
\end{equation}
for almost every $\omega$.
\end{Lemma}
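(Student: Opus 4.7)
\medskip

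The plan is to establish $\lim N = 1$ almost surely through a first and second moment (Chebyshev) argument, exploiting the defining normalization \eqref{choice:mu} together with the marginal law \eqref{dpl} and independence of distinct interior gaps stated just after it.

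\textbf{Step 1 (expectation).} Set $f(\ell) := (\ell\gamma)^{-1}\bar n(\mu\ell^2,\infty)$, so that $N=\sum_{j=1}^{m-1}f(\ell_j)$. Splitting on the value of $m$ (which is Poisson with mean $\nu$) and applying \eqref{dpl} to each fixed interior index, I expect to obtain
\begin{equation}
\mathbb{E}[N] \;=\; (\nu - 1+o(1))\int_0^\infty dp_\nu(\ell)\, f(\ell) \;+\; (\text{boundary terms})
\end{equation}
which equals $1+o(1)$ by the very choice \eqref{choice:mu} of $\mu$. The crude bound $\bar n(\mu\ell^2,\infty)\leq\mu\ell^2$ from \eqref{bounds:n}, together with the exponential tails of $dp_\nu$, controls the $O(1)$ correction from the fluctuations of $m$ around $\nu$ and from ignoring the first and last intervals.

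\textbf{Step 2 (variance).} Since interior gaps $\ell_j,\ell_k$ with $j\neq k$ are independent (and fluctuations of $m$ are Poissonian), the variance is dominated by the diagonal term $\nu\int_0^\infty dp_\nu(\ell)\, f(\ell)^2$. The change of variable $u=\ell-\pi/\sqrt\mu$ localizes the integral near the threshold $\ell_\ast=\pi/\sqrt\mu$ where $\bar n(\mu\ell^2,\infty)$ vanishes, and using $\mu\ell^2-\pi^2=\mu u(u+2\ell_\ast)$ together with $e^{-\nu\ell_\ast}=\lambda$ and $\nu\ell_\ast=\ln\lambda^{-1}$, a direct calculation should yield
\begin{equation}
\nu\int_0^\infty dp_\nu(\ell)\,f(\ell)^2 \;=\; O\!\left(\frac{1}{\lambda\nu}\right).
\end{equation}
The assumption $\gamma\gg\nu/(\ln\nu)^2$ guarantees $\lambda\nu\to\infty$ by the analysis following \eqref{rel-3}, hence $\mathrm{Var}(N)\to 0$.

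\textbf{Step 3 (from in probability to a.s.).} Chebyshev gives $N\to 1$ in probability along any sequence $(\gamma_k,\nu_k)$ satisfying \eqref{asu}. To upgrade to almost sure convergence for a fixed realization $\omega$, I would extract a rapidly growing subsequence $\nu_k$ (e.g. $\nu_k=2^k$, or chosen so that $1/(\lambda_k\nu_k)$ is summable) and apply Borel--Cantelli, then fill in intermediate values of $\nu$ by exploiting continuity/monotonicity of the Poisson counts in the realization $\omega$ together with the quantitative control afforded by the Poisson tail bounds \eqref{p:u}--\eqref{p:l}.

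\textbf{Main obstacle.} The delicate point is not Step~1 but the uniform variance estimate of Step~2: because $\bar n(\mu\ell^2,\infty)$ grows like $\mu\ell^2$ for large $\ell$, the second moment $\int dp_\nu\, f^2$ is a priori heavy-tailed, and one must carefully exploit both the exponential cutoff of the Poisson gap distribution and the relation $\mu\sim\gamma/\lambda$ to obtain the clean $1/(\lambda\nu)$ behavior across the whole regime \eqref{asu}. Promoting the resulting convergence in probability to the almost sure statement as the continuous parameter $\nu$ tends to infinity is the other nontrivial point, and is where the Poisson deviation bounds from Appendix~2 are likely to be used.
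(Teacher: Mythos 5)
Your proposal follows essentially the same route as the paper: a first and second moment computation of $N$ conditioned on the Poisson number $m$ of scatterers, using the symmetry of the interior gaps, the crude bound $\bar n(\mu\ell^2,\infty)\le\mu\ell^2$ from \eqref{bounds:n}, the exponential marginals \eqref{dpl}, the Poisson deviation bounds \eqref{p:u}--\eqref{p:l}, and the relation $\mu\sim\gamma/\lambda$ with $\lambda\nu\to\infty$ to control the diagonal variance term of order $1/(\lambda\nu)$. If anything, your Step 3 is more explicit than the paper about upgrading convergence in probability to almost sure convergence (the paper simply asserts that $\langle N\rangle\to 1$ and $\langle N^2\rangle\to 1$ imply the result), so that step is a refinement rather than a deviation.
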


\begin{proof}
We shall show that $\langle N \rangle \to 1$ and $\langle N^2 \rangle \to 1$ in the limit under consideration, which implies the result.

With $\chi_m(\omega)$ denoting the characteristic function of the set corresponding to having exactly $m$ scatterers inside the interval $(0,\nu)$, we have, by symmetry,
\begin{equation}\label{eq:N}
\langle N \rangle = \frac 1 \gamma \sum_{m\geq 2} (m-1) \langle \chi_m \ell_1^{-1} \bar n(\mu\ell_1^2,\infty) \rangle\,.
\end{equation}
To obtain a lower bound, we can restrict the sum to $m\geq 1+(1-\epsilon)\nu$, for some $0<\epsilon<1$. We thus  obtain
\begin{align}\nonumber
\langle N \rangle & \geq \frac{(1-\epsilon) \nu}\gamma \sum_{m\geq 1+(1-\epsilon)\nu} \langle \chi_m \ell_1^{-1} 
\bar n(\mu\ell_1^2,\infty) \rangle \\ & = (1-\epsilon) \left( 1 - \frac \nu 
\gamma \sum_{m < 1+(1-\epsilon)\nu} \langle \chi_m \ell_1^{-1} \bar n(\mu\ell_1^2,\infty) \rangle \right)\,,
\end{align}
where we used (\ref{choice:mu}) the obtain the last equality. From (\ref{bounds:n}) it follows that $\bar n(\mu\ell_1^2,\infty) \leq \mu\ell_1^2$, and hence
\begin{equation}
\langle N \rangle \geq   \left(1-\epsilon\right) \left( 1 - \frac{\nu  \mu}\gamma  \langle P \ell_1 \rangle \right) \quad , \quad P(\omega) =  \sum_{m < 1+(1-\epsilon)\nu}  \chi_m(\omega)\,.
\end{equation}
A Schwarz inequality yields
\begin{equation}\label{schw}
\langle P \ell_1 \rangle \leq \langle P \rangle^{1/2} \langle \ell_1^2 \rangle^{1/2} = \langle P \rangle^{1/2} \frac {\sqrt 2}{\nu}\,,
\end{equation}
where we have evaluated the expectation value of $\ell_1^2$ using (\ref{dpl}).
Moreover, it follows from (\ref{p:l}) that $\langle P \rangle$ is
bounded by a factor that decays exponentially in $\nu$. In particular,
using (\ref{rel-1}), it follows that
\begin{equation}
\liminf \langle N \rangle \geq 1 -\epsilon \,.
\end{equation}
Since $\epsilon$ was arbitrary, this proves the desired lower bound.

For an upper bound, we can proceed similarly. From (\ref{eq:N}), (\ref{choice:mu}) and $\bar n(\mu\ell_1^2,\infty)\leq  \mu\ell_1^2$ we have
\begin{equation}
\langle N \rangle \leq 1+\epsilon  + \frac \mu \gamma  \sum_{m > 1+(1+\epsilon)\nu} (m-1) \langle \chi_m \ell_1\rangle
\end{equation}
for $\epsilon>0$. An similar analysis as above, using the Schwarz inequality as in (\ref{schw}), and (\ref{p:u}) instead of (\ref{p:l}),  then yields
\begin{equation}
\limsup \langle N \rangle \leq 1 +\epsilon \,.
\end{equation}
This shows that $\langle N \rangle \to 1$, as claimed.

Next we consider $\langle N^2\rangle$. Similarly to (\ref{eq:N}), we can write it as
 \begin{align}\nonumber
\langle N^2 \rangle & = \frac 1{\gamma^2} \sum_{m\geq 3} (m-1)(m-2)  \langle \chi_m \ell_1^{-1} 
\bar n(\mu\ell_1^2,\infty) \ell_2^{-1} \bar n(\mu\ell_2^2,\infty) \rangle \\ & \quad  +  
\frac 1 {\gamma^2} \sum_{m\geq 2} (m-1)  \langle \chi_m \ell_1^{-2} \bar n(\mu\ell_1^2,\infty)^2  \rangle \,.\label{eq:N2}
\end{align}
An analysis as above shows that the first sum can be restricted to an
interval $m\in [(1-\epsilon)\nu, (1+\epsilon) \nu]$, and therefore
yields $1$ in the limit considered. Hence it remains to show that last
term in (\ref{eq:N2}) goes to zero. If we restrict the sum to $m-1\leq
2\nu$, say, then we have
\begin{equation}\label{rs2}
\sum_{m =2}^{1+2\nu} (m-1)  \langle \chi_m \bar \ell_1^{-2} n(\mu\ell_1^2,\infty)^2  \rangle \leq 2 \nu \langle 
\bar \ell_1^{-2} n(\mu\ell_1^2,\infty)^2 \rangle \leq 2^4 \frac{\mu^2}{\nu} e^{-\pi \nu /\sqrt{\mu}}
\end{equation}
where we used the upper bound in (\ref{bounds:n}) as well as the bound
\begin{equation}\label{theb}
e^x \int_x^\infty e^{-t}\left(t-\frac {x^2}{t}\right)^k dt \leq k! \,2^k
\end{equation}
for $k=2$. It follows from (\ref{rel-1}) that (\ref{rs2}), when
multiplied by $\gamma^{-2}$, goes to zero as $\nu\to \infty$ if
$\lambda \gg \nu/(\ln \nu)^2$.

For the remaining terms corresponding to $m>1+2\nu$, we use again the Schwarz inequality, which implies that
\begin{equation}\label{45}
\sum_{m> 1+2\nu} (m-1)  \langle \chi_m \bar \ell_1^{-2} n(\mu\ell_1^2,\infty)^2  \rangle  \leq \nu\, \sqrt{2}\, e^{(1-\ln 4) \nu} \langle \ell_1^{-4} \bar n(\mu\ell_1^2,\infty)^4  \rangle^{1/2}\,.
\end{equation}
Because of the upper bound in (\ref{bounds:n}) and (\ref{theb}) for $k=4$ the latter expectation value is above bounded by
\begin{equation}
\langle \ell_1^{-4} \bar n(\mu\ell_1^2,\infty)^4  \rangle \leq 4!\, 2^4 \left( \frac{\mu}{ \nu} \right)^4 e^{-\pi \nu/\sqrt{\mu}} \,.
\end{equation}
With the aid of (\ref{rel-1}) one now readily checks that the right
side of (\ref{45}), when divided by $\gamma^2$, goes to zero as
$\nu\to \infty$. This completes the proof.
\end{proof}

We can take $n_j = (\ell_j\gamma)^{-1} \bar n(\mu\ell_j^2,\infty)$ for
$1\leq j \leq m-1$ as above, and divide each $n_j$ by
$N=\sum_{j=1}^{m-1} n_j$ in order to have the right particle
number. Then (\ref{upper}) implies the upper bound
\begin{equation}\label{upper2}
e_\omega(\gamma,\sigma,\nu) \leq  E \equiv \sum_{j=1}^{m-1} \frac{n_j}{N\ell_j^2} e(n_j \ell_j \gamma/N,\infty)\ .
\end{equation}
The following lemma concludes the proof of the upper bound in Theorem~\ref{thm:gp}.

\begin{Lemma}\label{lem:3}
If $\nu\to \infty$ and $\gamma\to\infty$ with $\gamma\gg \nu/(\ln \nu)^2$ then
\begin{equation}
\lim \frac E{e_0(\gamma,\nu)} = 1
\end{equation}
for almost every $\omega$.
\end{Lemma}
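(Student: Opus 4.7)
The strategy mirrors the proof of Lemma~\ref{lem:1}: show that $E$ is concentrated around its expectation, and that this expectation is $(1+o(1))\,e_0(\gamma,\nu)$. I would split the argument into three steps.

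\textbf{Step 1 (Removing the normalization).} Introduce the unnormalized sum
\begin{equation*}
S:=\sum_{j=1}^{m-1}\frac{n_j}{\ell_j^2}\,e(n_j\ell_j\gamma,\infty),
\end{equation*}
so that $E$ is obtained from $S$ by replacing each $n_j$ by $n_j/N$. Using the bound $e'(\kappa,\infty)\in[1/2,3/4]$ implied by~\eqref{32c} and the mean value theorem, one has $|E-S/N|\le C\,|1-1/N|\,S$. Lemma~\ref{lem:1} ($N\to 1$ a.s.) then gives $E/S\to 1$ a.s., reducing the problem to the convergence of $S$.

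\textbf{Step 2 (First moment).} Apply the Legendre identity $\bar n\,e(\bar n,\infty)=g(\mu\ell^2,\infty)+\mu\ell^2\,\bar n$ at $\bar n=\bar n(\mu\ell_j^2,\infty)$, combined with $n_j=(\ell_j\gamma)^{-1}\bar n(\mu\ell_j^2,\infty)$, to rewrite
\begin{equation*}
S=\mu\,N+T,\qquad T:=\sum_{j=1}^{m-1}\frac{g(\mu\ell_j^2,\infty)}{\gamma\ell_j^3}.
\end{equation*}
By symmetry and~\eqref{dpl}, after restricting $m$ to $[(1-\epsilon)\nu,(1+\epsilon)\nu]$ via the Poisson deviation bounds~\eqref{p:u}--\eqref{p:l}, one obtains $\langle T\rangle=(1+o(1))\,\nu\int dp_\nu(\ell)\,g(\mu\ell^2,\infty)/(\gamma\ell^3)$. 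Combined with Lemma~\ref{lem:1} and the identity $e_0(\gamma,\nu)=\mu+\nu\int dp_\nu(\ell)\,g(\mu\ell^2,\infty)/(\gamma\ell^3)$ at the optimal $\mu$, this yields $\langle S\rangle=(1+o(1))\,e_0(\gamma,\nu)$. The discarded tails are handled by Schwarz as in~\eqref{schw} and moment bounds coming from~\eqref{bounds:n} and~\eqref{theb}.

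\textbf{Step 3 (Variance and conclusion).} Decompose $\langle T^2\rangle$ into diagonal and off-diagonal parts exactly as in~\eqref{eq:N2}. Independence of $\ell_1,\ell_2$ (remark following~\eqref{dpl}) factorizes the off-diagonal part into $(1+o(1))\langle T\rangle^2$ after restricting $m$ to $[(1-\epsilon)\nu,(1+\epsilon)\nu]$. For the diagonal part, use $|g(\mu\ell^2,\infty)|\le C\,\bar n(\mu\ell^2,\infty)^2$ together with the change of variables $t=\nu\ell$, $x=\pi\nu/\sqrt\mu$ that puts the integrand in the form $(\mu^4/\nu^2\gamma^2)\cdot(t-x^2/t)^4/t^2$; combined with the elementary estimate $(t-x^2/t)^2/t^2\le 1$ for $t\ge x$, \eqref{theb} with $k=2$, and \eqref{rel-1}, this bounds the diagonal contribution by $C\,\mu^3/(\gamma\nu)$, which is $o(e_0^2)$ precisely when $\mu/(\gamma\nu)\to 0$; by~\eqref{rel-2} this reduces to $\gamma\gg\nu/(\ln\nu)^2$. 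With $\mathrm{Var}(N)\to 0$ from Lemma~\ref{lem:1} and Cauchy--Schwarz controlling the $N$--$T$ cross-term, we conclude $\mathrm{Var}(S)=o(\langle S\rangle^2)$. Chebyshev together with a standard Borel--Cantelli argument along a sequence $\nu_k\to\infty$ then gives $S/\langle S\rangle\to 1$ a.s., and combining with Step~1 proves the lemma.

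\textbf{Main obstacle.} The decisive estimate is the diagonal bound in Step 3: it is precisely this estimate that forces the sharp condition $\gamma\gg\nu/(\ln\nu)^2$, since below this threshold $\lambda\nu=O(1)$, only finitely many intervals are significantly occupied, and the law of large numbers breaks down; the energy is then genuinely random, as already flagged in the discussion following Theorem~\ref{thm:gp}. The main technical point is identifying the correct exponent in~\eqref{theb} (namely $k=2$ after exploiting $(t-x^2/t)/t\le 1$), since a naive bound gives only $\gamma\gg\nu$ and loses the logarithmic improvement.
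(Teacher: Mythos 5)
Your proof is correct and takes essentially the same route as the paper's: both arguments reduce the claim to the first- and second-moment computations of Lemma~\ref{lem:1} by noting that each summand of $E$ is comparable to $\mu n_j$ (your Legendre identity $S=\mu N+T$ just makes this bookkeeping exact) and that the expectation reproduces $e_0(\gamma,\nu)$ by the optimality of the chosen occupation numbers, with the diagonal term forcing the condition $\gamma\gg\nu/(\ln\nu)^2$. Your explicit handling of the normalization $N$, the diagonal variance bound via \eqref{theb}, and the Chebyshev/Borel--Cantelli step merely spell out details the paper leaves implicit.
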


\begin{proof}
We proceed as in the proof of Lemma~\ref{lem:1}, and show that both $\langle E\rangle/e_0(\gamma,\nu)$ and 
$\langle E^2 \rangle/e_0(\gamma,\nu)^2$ converge to $1$ in the limit considered. Because of the result in Lemma~\ref{lem:1}, 
we can replace $N$ by $1$ without loss of 
generality. Using (\ref{inf-n-ord}) (for $\alpha=\infty$) and (\ref{32c}) we see that
\begin{equation}
\frac{n_j}{\ell_j^2} e(n_j \ell_j \gamma,\infty) \sim \mu n_j \,.
\end{equation}
Since also $\mu \sim e_0(\gamma,\nu)$ we can proceed exactly as in the
proof of Lemma~\ref{lem:1} to conclude the result.
\end{proof}

We thus conclude that, for almost every $\omega$,
\begin{equation}
\limsup \frac{ e_\omega(\gamma,\sigma,\nu)}{e_0(\gamma,\nu)} \leq 1
\end{equation}
in the limit $\nu\to \infty$, $\gamma\to \infty$ with $\gamma\gg \nu/(\ln \nu)^2$.


\subsubsection{Lower Bound}

We start from (\ref{below}).
Recall the definition of $g(\mu,\alpha)$ in (\ref{GCE-one-box}).
We have
\begin{equation}\label{lb21}
e_\omega(\gamma,\sigma,\nu) \geq \inf_{\{n_j\}} \sum_{j=0}^m \frac{n_j}{\ell_j^{2}} e(n_j \ell_j \gamma,\ell_j \sigma) \geq \mu + \sum_{j=0}^m \frac 1{\gamma \ell_j^3} g(\mu\ell_j^2,\ell_j\sigma)
\end{equation}
for any $\mu\in\mathbb{R}$.

Using the lower bound in (\ref{32c}), as well as (\ref{e0a}),  have
\begin{equation}
g(\mu\ell^2,\ell\sigma)\geq -\frac{1}{2}\left[ \mu\ell^2 - e(0,\ell\sigma)\right]_+^2 \geq - \frac{\mu^2 \ell^4}{2} \theta(\ell - \widetilde \ell)\,,
\end{equation}
where
\begin{equation}
\widetilde \ell = - \frac 1{2\sigma} + \sqrt{ \frac 1{4\sigma^2} + \frac C \mu }
\end{equation}
and $\theta$ denotes the Heaviside step function. For the first and last interval, corresponding to $j=0$ and $j=m$, respectively, we shall simply use that
\begin{equation}\label{0m}
\frac 1{\gamma \ell_j^3} g(\mu\ell_j^2,\ell_j\sigma) \geq -\frac{\mu^2 \ell_j}{2\gamma} \,.
\end{equation}
When divided by $\mu \sim e_0(\gamma,\nu)$, this goes to zero almost
surely in the limit considered, since $\ell_j \lesssim \nu^{-1}$ and
$\mu \ll \nu \gamma$.

 Pick some small
$\epsilon >0$, and consider the contribution to the sum in
(\ref{lb21}) coming from intervals with length $\ell_j < (\epsilon
\sigma)^{-1}$, for $1\leq j\leq m-1$. If $\epsilon \sigma \geq 1/ \widetilde \ell$, this
contribution is zero, hence we can assume $\epsilon \sigma <
1/\widetilde \ell $ from now on, which is equivalent to
\begin{equation}
\sigma^2 < \frac {\mu(1+\epsilon)} {C\epsilon^2} \,.
\end{equation}
Because of our assumption (\ref{asu}) on $\sigma$, this means that
\begin{equation}
\frac \nu {1+\ln\left(1+\nu^2/\gamma\right)} \ll \sqrt{\mu} \sim \sqrt{\gamma\, f(\nu^2/\gamma)}\,,
\end{equation}
i.e., $\nu^2 \ll \gamma$.
We have
\begin{equation}
\frac 1{\gamma \ell_j^3} g(\mu\ell_j^2,\ell_j\sigma) \geq  - \frac{\mu^2}{2\gamma\epsilon\sigma} \sim - \frac{\mu}{\nu} 
\frac{ \nu}{\sigma (1+\ln(1+\nu^2/\gamma))}
\end{equation}
in this case, and the last fraction goes to zero in the limit considered.

Finally consider the intervals $1\leq j\leq m-1$ with length $\ell_j\geq (\epsilon
\sigma)^{-1}$. In this case, we can use Lemma~\ref{lem1} to bound
\begin{equation}
g(\mu\ell_j^2,\ell_j\sigma)\geq \left( 1 - C\epsilon^{1/2} \right) g( \mu \ell_j^2 (1-C\epsilon^{1/2})^{-1},\infty)\,.
\end{equation}
An either case, we thus have the lower bound
\begin{equation}\label{eith}
\frac 1{\gamma \ell_j^3} g(\mu\ell_j^2,\ell_j\sigma) \geq  \frac{ \left( 1 - C\epsilon^{1/2} \right)}{\gamma \ell_j^3}  
g( \mu \ell_j^2 (1-C\epsilon^{1/2})^{-1},\infty) - 
\frac{C\mu}{\sigma (1+\ln(1+\nu^2/\gamma))}
\end{equation}
for $1\leq j \leq m-1$, for some $C>0$.

Let $\bar \mu = \mu (1-C\epsilon^{1/2})^{-1}$.
With $\bar n(\mu,\infty)$ defined in Subsection~\ref{ss:aux} as the optimizer in (\ref{GCE-one-box}), we can write
\begin{equation}\label{lbg}
g(\bar\mu\ell^2 ,\infty) = \bar n( \bar\mu\ell^2 ,\infty) e(\bar n( \bar\mu\ell^2 ,\infty),\infty) - \bar\mu\ell^2\bar n(\bar\mu\ell^2,\infty)\,.
\end{equation}
We shall choose $\bar \mu$ as in Subsection~\ref{ss:ub}, i.e., such that
\begin{equation}\label{choice:bmu}
 \nu \int_0^\infty dp_\nu(\ell)\, \frac 1{\ell \gamma} \bar n (\bar\mu \ell^2,\infty) = 1\,.
\end{equation}
Then $\bar \mu \sim e_0(\gamma,\nu)$. Let $n_j = (\ell_j \gamma)^{-1} \bar n (\bar\mu\ell_j^2,\infty)$ and $N = \sum_{j=1}^{m-1} n_j$.
From (\ref{lb21}), (\ref{0m}), (\ref{eith}) and (\ref{lbg}) we have
\begin{align}\nonumber
e_\omega(\gamma,\sigma,\nu)  & \geq   \left(1-C\epsilon^{1/2} \right)  \left( \bar\mu \left( 1- N\right) + \sum_{j=1}^{m-1}  
\frac{n_j}{\ell_j^2} e(n_j \ell_j\gamma ,\infty) \right) \\ & \quad - \frac{C\mu m}{\sigma (1+\ln(1+\nu^2/\gamma))} - \frac{\mu^2 
(\ell_0+\ell_m)}{2\gamma} \,.\label{flb}
\end{align}
We have already shown in Lemma~\ref{lem:1} that $\lim N = 1$ almost surely, and in Lemma~\ref{lem:3} that the ratio of the 
sum in (\ref{flb}) and $e_0(\gamma,\nu)$ converges to $1$ almost surely. 
Moreover, after division by $e_0(\gamma,\nu)\sim\mu$ the terms on the last line of (\ref{flb}) vanish almost surely.
This proves that
\begin{equation}
\liminf \frac{e_\omega(\gamma,\sigma,\nu)}{e_0(\gamma,\nu)} \geq 1-C \epsilon^{1/2}
\end{equation}
for almost every $\omega$.
This holds for all $\epsilon>0$, hence the proof of the theorem is complete.

\section{Conclusions}

The main conclusions that can be drawn from the preceding discussion are as follows:
\smallskip

\noindent 1. BEC in the ground state of the interacting gas in the Gross-Pitaevskii regime can survive even in a strong 
random potential with a high density of  scatterers. As far as BEC is concerned the interacting gas in this regime thus 
behaves in a similar way as an ideal 
gas at zero temperature. The character of the wave function of the condensate, however, is strongly affected by the interaction.
\smallskip

\noindent 2. A random potential may lead to localization of the wave function of the condensate, even though the density of 
obstacles is much less than the particle density. The interparticle interaction counteracts this effect and can lead to 
complete delocalization if the interaction is strong enough.
\smallskip

\noindent 3. In terms of the interaction strength, $\gamma$, and density of scatterers, $\nu$, the transition between 
localization and delocalization occurs in the model considered when $\gamma\sim \nu^2$. In the localized regime $\gamma\ll \nu^2$ 
the chemical potential is $\mu \sim (\nu/\ln(\nu^2/\gamma))^2$. For
$\gamma\lesssim \nu/(\ln \nu)$ the condensate is localized in a small number of random intervals.

\section{Appendix 1: Energy Gaps}

We consider the Schr\"odinger operator $H= -\partial_z^2 + W(z)$ on $L^2([0,1])$, with Dirichlet boundary conditions. 
Assume, for the moment, that $W$ is a bounded and continuous function. Without loss of generality, we may assume $W\geq 0$.  
Let $0<e_0<e_1$ denote the lowest two eigenvalues of $H$.

\begin{Lemma}
Define $\eta>0$ by
\begin{equation}\label{deflambda}
\eta^2 = \pi^2 + 3 \int_0^1 W(z) dz
\end{equation}
Then
\begin{equation}\label{gap}
e_1 - e_0 \geq   \eta  \ln \left( 1 + \pi e^{-2\eta}\right)
\end{equation}
\end{Lemma}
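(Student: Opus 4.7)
The approach is the \emph{ground state representation}: since $\psi_0>0$ on $(0,1)$ is the principal eigenfunction of $H$ with eigenvalue $e_0$, every $\phi\in H^1_0([0,1])$ orthogonal to $\psi_0$ factors as $\phi=f\psi_0$ with $\int_0^1 f\psi_0^2\,dz=0$, and an integration by parts yields
\[
\langle\phi,(H-e_0)\phi\rangle=\int_0^1 |f'|^2\psi_0^2\,dz,\qquad \|\phi\|^2=\int_0^1 f^2\psi_0^2\,dz.
\]
Hence
\[
e_1-e_0 \;=\; \inf_{\substack{\int f\psi_0^2=0\\ f\not\equiv 0}}\frac{\int_0^1 |f'|^2\psi_0^2\,dz}{\int_0^1 f^2\psi_0^2\,dz},
\]
so the task is to bound the reciprocal of the weighted Poincar\'e constant of the measure $\psi_0^2\,dz$ on $[0,1]$ from below by $\eta\ln(1+\pi e^{-2\eta})$.

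I would then assemble three a priori bounds. First, the trial function $\sqrt 2\sin(\pi z)$ gives $e_0\leq \pi^2+2\int_0^1 W\,dz\leq \eta^2-\int_0^1 W\leq \eta^2$. Second, the one-dimensional Sobolev embedding yields $\|\psi_0\|_\infty^2\leq \|\psi_0\|_2\|\psi_0'\|_2\leq e_0^{1/2}\leq \eta$. Third, since $W\geq 0$ and $e_0\leq\eta^2$, the eigenvalue equation $-\psi_0''+W\psi_0=e_0\psi_0$ gives $\psi_0''+\eta^2\psi_0\geq 0$, and a Green's-function/Sturm-comparison argument starting at the point of maximum $z_\ast\in(0,1)$, where $\psi_0(z_\ast)\geq\|\psi_0\|_2=1$ and $\psi_0'(z_\ast)=0$, yields the cosine lower bound
\[
\psi_0(z)\;\geq\;\cos\bigl(\eta(z-z_\ast)\bigr)\qquad\text{for }|z-z_\ast|\leq \pi/(2\eta).
\]

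To close the estimate I would use a direct Cauchy--Schwarz argument in the ground state representation: for any $f$ with $\int f\psi_0^2=0$, write $f(z)-f(z_\ast)=\int_{z_\ast}^z f'(s)\,ds$, insert the factor $\psi_0(s)\psi_0(s)^{-1}$ into the integrand, and apply Cauchy--Schwarz to obtain
\[
\int_0^1 f^2\psi_0^2\,dz \;\leq\; A\int_0^1 |f'|^2\psi_0^2\,dz, \qquad A=\int_0^1 \psi_0(z)^2\left|\int_{z_\ast}^z\psi_0(s)^{-2}\,ds\right|dz,
\]
so that $e_1-e_0\geq A^{-1}$. The cosine bound near $z_\ast$, together with the global bound $\psi_0\leq\eta^{1/2}$ and an Agmon/ODE-comparison estimate of the form $\psi_0(z)/\psi_0(z_\ast)\gtrsim \sinh(\eta\,\mathrm{dist}(z,\partial[0,1]))/\sinh(\eta)$ in the boundary regions where $\psi_0$ vanishes, should give $A\leq [\eta\ln(1+\pi e^{-2\eta})]^{-1}$ after optimization.

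The main obstacle is controlling $A$: because $\psi_0^2$ vanishes at the endpoints, $\psi_0^{-2}$ is non-integrable there, so a naive Muckenhoupt-type bound fails and one must carefully balance the contribution of $\int \psi_0^{-2}$ near $\partial[0,1]$ against the decay of $\psi_0^2$. The specific functional form $\eta\ln(1+\pi e^{-2\eta})$ should emerge from a tradeoff between the interior cosine bound, which yields a contribution of order $1/\eta$ on a window of length $\pi/\eta$ around $z_\ast$, and the exponential suppression of $\psi_0$ over the complementary region; the delicate point is optimizing this interior/boundary split without discarding the logarithm.
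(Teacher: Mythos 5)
Your reduction is correct and genuinely different from the paper's route: the paper follows Kirsch--Simon, introducing Pr\"ufer variables $u=r\cos\theta$, $u'=-\eta r\sin\theta$, using monotonicity of $\theta$ in $E$ together with $\theta(1,e_1)-\theta(1,e_0)=\pi$, and controlling $\partial_E\theta$ by Gronwall's inequality; the exponent $2\eta$ and the definition $\eta^2=\pi^2+3\int W$ drop out of optimizing that Gronwall bound. Your ground-state representation, the identity $e_1-e_0=\inf\int|f'|^2\psi_0^2/\int f^2\psi_0^2$, the a priori bounds $e_0\le\pi^2+2\int W$ and $\|\psi_0\|_\infty^2\le e_0^{1/2}$, the cosine comparison near $z_*$ (which uses only $W\ge 0$), and the Cauchy--Schwarz step giving $e_1-e_0\ge A^{-1}$ are all sound.

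The gap is the lower bound on $\psi_0$ away from its maximum, which is exactly where the exponential in the Lemma lives. Your proposed comparison $\psi_0(z)/\psi_0(z_*)\gtrsim\sinh(\eta\,\mathrm{dist}(z,\partial[0,1]))/\sinh\eta$ would require a differential inequality of the form $\psi_0''-\eta^2\psi_0\le 0$, i.e.\ a \emph{pointwise} bound $W\le e_0+\eta^2$. But the hypotheses control only $\int_0^1 W$, not $\sup W$: take $W$ equal to a spike of height $M$ and width $\eta^2/M$ with $M\to\infty$ at fixed $\int W$ (and in the paper's application $W$ literally contains delta functions), and the sinh comparison is simply unavailable. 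The only comparison you legitimately derive is the cosine bound, valid on the window $|z-z_*|\le\pi/(2\eta)$, and even there $\int\sec^2(\eta(z-z_*))\,dz$ diverges at the window's edges, so it cannot by itself control $\int_{z_*}^z\psi_0^{-2}$. What is needed, and what your sketch does not supply, is a Harnack-type lower bound $\psi_0\gtrsim e^{-C\eta}\psi_0(z_*)$ in the bulk depending only on $\|W\|_{L^1}$; proving such a bound is essentially equivalent in difficulty to the Gronwall estimate on the Pr\"ufer angle that the paper actually performs. Finally, even granting such a bound, the argument would yield $e_1-e_0\ge c_1\eta e^{-c_2\eta}$ with constants coming from your interior/boundary splitting; there is no reason the result would reproduce the specific form $\eta\ln(1+\pi e^{-2\eta})$ with $\eta^2=\pi^2+3\int W$ claimed in the statement, since that expression is an artifact of the Gronwall optimization.
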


For the proof, we follow closely \cite{kirschsimon}.

\begin{proof}
For general $E>0$, let $u(z,E)$ denote the solution of
\begin{equation}\label{schreq}
- u'' + W u = E u
\end{equation}
with $u(0,E)=0$ and $u'(0,E) = 1$ (the prime denoting $\partial_z$). Pick an $\eta > 0$, and introduce the 
Pr\"ufer variables $r(z,E)$ and $\theta(z,E)$ via
$$
u = r \cos \theta \ , \quad u' = -\eta r \sin \theta
$$
The variable $\theta$ is only determined modulo $2\pi$. We fix it uniquely be requiring it to be continuous and 
$\theta(0,E) = -\pi/2$. We note that $\theta$ is increasing in $E$ for fixed $z$, i.e.,  
$\varphi(z,E) = \partial_E \theta(z,E)\geq 0$. This follows from
$$
\varphi = \dot \theta = \eta \frac{ \dot u u' - u \dot u' }{\eta^2 u^2 + (u')^2}
$$
where the dot stands for $\partial_E$. The function $\chi = \dot u u'
- u \dot u'$ satisfies $\chi ' = u^2$ and is thus positive since it
vanishes at $z=0$.

From (\ref{schreq}) it follows that
\begin{equation}\label{thp}
\theta' = \eta^{-1} \left( E - W\right) \cos^2 \theta + \eta \sin^2 \theta
\end{equation}
In particular, $\theta'>0$ near points $z \in [0,1]$ where $u$ vanishes.
Since $u(z,e_0)$ has no zeroes for $z\in(0,1)$, we conclude that $\theta(1,e_0)=\pi/2$. Moreover, $u(z,e_1)$ has exactly one zero in $(0,1)$, hence $\theta(1,e_1) = 3\pi/2$. In particular,
\begin{equation}
\theta(1,e_1)   - \theta(1,e_0) = \pi
\end{equation}

We have $\varphi(0,E) = 0$ for all $E$. From (\ref{thp}) it follows that
\begin{align*}
\varphi' & = \left( \eta + \eta^{-1} \left( W - E \right) \right) \sin(2\theta) \varphi + \eta^{-1} \cos^2(\theta) \\ & \leq \left| \eta + \eta^{-1} \left( W - E \right) \right| \varphi + \eta^{-1}
\end{align*}
From Gronwall's inequality \cite[Thm. III.1.1]{gronwall}, we thus conclude that
$$
\varphi(z,E) \leq \frac z \eta \exp \left( \int_0^z \left| \eta + \eta^{-1} \left( W(y) - E \right) \right| dy \right)
$$
In particular,
$$
\varphi(1,E) \leq  \frac 1 \eta \exp \left( \eta + \eta^{-1} E + \eta^{-1} \int_0^1 W(y) dy \right)
$$
for $E>0$.

This implies that
\begin{align}\nonumber
\pi & = \theta(1, e_1) - \theta(1, e_0) = \int_{e_0}^{e_1} \varphi(1,E) dE \\ & \leq   \exp \left( \eta + \eta^{-1} \int_0^1 W + \eta^{-1} e_0 \right) \left( \exp\left( \eta^{-1} (e_1-e_0) \right) - 1 \right)
\label{dt}
\end{align}
Using the trial function $\sqrt{2} \sin(\pi z)$, we see that $e_0\leq \pi^2 +2 \int_0^1 W$.
From (\ref{dt}) we thus have
$$
e_1- e_0 \geq \eta \ln \left(1+ \pi  \exp \left( -\eta -\frac 3 \eta \int_0^{1}  W(y) dy -\frac {\pi^2}{\eta}   \right) \right)
$$
We choose $\eta$ in order to make the exponent as large as
possible, which leads to the choice (\ref{deflambda}), and concludes
the proof.
\end{proof}

By a simple approximation argument, the bound (\ref{gap}) extends to our case of interest, where
\beq
W(z) = \sigma \sum_{i=1}^m \delta(z-z_i) + \gamma |\psi_0(z)|^2
\eeq
for $\sigma\geq 0$ and $\gamma\geq 0$.
The result is then
\beq\label{gap1}
e_1-e_0 \geq  \eta \ln \left( 1+  e^{-2\eta}\right) \quad \text{with\ } \eta = \sqrt{\pi^2 + 3 m \sigma + 3 \gamma}\,.
\eeq

\subsubsection{Remark.} While the bound \eqref{gap1} on the gap is uniform in the location of the scatterers, we do not expect 
it to be optimal for large $\eta$. In fact, the gap is presumably smallest when all the $z_i$ are equal to $1/2$. In this case, 
a simple calculation (for $\gamma=0$) shows that the gap is asymptotically Êequal to $32\pi^2/(\sigma m)$ Êfor large $\sigma m$.

\section{Appendix 2: Remarks on the Poisson Random Point Field (RPF) on the Real Axis $\mathbb{R}$.}

\subsection{Remark 1} (Poisson and exponential distributions, see, e.g., \cite{LZ}.)

\smallskip

\noindent The following theorem is a consequence of the hypothesis of an independent uniform
distribution of $m$ point "impurities" on the interval $\Lambda_l = [-l/2,l/2] \subset \mathbb{R}$  (homogeneous
$m$-point binomial RPF) and the limit:
$\Lambda_l \rightarrow \mathbb{R} , m \rightarrow \infty$, for a fixed density
\begin{equation}\label{lambda}
\lambda = \lim_{m, \, l \rightarrow\infty} \ \frac{m}{l} \ .
\end{equation}
\begin{theorem}\label{distributions}
{(a)} In the limit (\ref{lambda}) the finite-volume binomial RPF
of impurities $\left\{X_{l}^\omega\right\}$ converges (\textit{in distribution}) to the Poisson RPF
$\left\{X^\omega\right\}$  with the Poisson distribution
\begin{equation}\label{Poiss1}
\mathbb{P}\{\omega:|X^\omega \cap \Lambda_a| = n\} = \frac{(a \lambda)^n}{n!} \ e^{- \lambda a} \ ,
\end{equation}
of the number of impurities in any interval $\Lambda_a$ of the length $a$.\\
{(b)} The uniform distribution of $m-1$ independent points of impurities split up the box $\Lambda_l$ into a set
of sub-intervals $\left\{I_{j}^{\omega}\right\}_{j=1}^{m}$ with $\cup_{j=1}^{m} I_{j}^{\omega} = \Lambda_l$
and their random lengths $\{|I_{j}^{\omega}| = L_{j}^{\omega}\}_{j=1}^{m}$, $\omega\in\Omega$ have the joint
probability distribution
\begin{eqnarray}\label{n-distribution}
\label{measure-L} dP_{l,m}(L_1,...,L_m) = \frac{(m-1)!}{l^{m-1}} \ \delta(L_1+...+L_m - l) \
\ dL_1 dL_2 \ldots dL_m \ .
\end{eqnarray}
Therefore, the random variables $\{L_{j}^{\omega}\}_{j=1}^{m}$ are \textit{dependent}.\\
{(c)} In the thermodynamic limit: $\lambda = \lim_{m, \, l\rightarrow\infty} \ {m}/{l}$, the intervals
$\left\{L_{j}^{\omega}\right\}_{j\geq 1}$ form an infinite set of \textit{independent} random variables and the distribution
corresponding to (\ref{n-distribution}) converges (\textit{weakly}) to the product-measure distribution ${\sigma}_{\lambda}$
defined by the set of consistent marginals:
\begin{eqnarray}\label{marginals}
d{\sigma}_{\lambda, j_1, \ldots,
j_k}(L_{j_1},\ldots,L_{j_k})=\lambda^k\prod_{s=1}^{k}e^{-\lambda
L_{j_s}}dL_{j_s} \ , \ k \in \mathbb{N} \ .
\end{eqnarray}
\end{theorem}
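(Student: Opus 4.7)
\textbf{Plan for Theorem \ref{distributions}.} My plan is to handle the three parts sequentially, treating part (a) as a direct application of the classical Poisson limit theorem, part (b) as a change-of-variables computation on the simplex of order statistics, and part (c) as a marginal-distribution calculation using the explicit density from (b).

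For part (a), I would observe that if $m$ points are placed uniformly and independently in $\Lambda_l$, then for any fixed subinterval $\Lambda_a$ the number of impurities falling inside is binomial with parameters $(m,p)$ where $p=a/l$. In the limit (\ref{lambda}) one has $mp\to a\lambda$, so the standard estimate
\begin{equation*}
\binom{m}{n}\left(\frac{a}{l}\right)^{n}\left(1-\frac{a}{l}\right)^{m-n}\longrightarrow \frac{(a\lambda)^n}{n!}e^{-a\lambda}
\end{equation*}
yields (\ref{Poiss1}). Finite-dimensional distributions for several disjoint subintervals are handled identically via the multinomial-to-product-of-Poissons limit, which suffices for convergence in distribution of the point field.

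For part (b), I would start from the fact that $m-1$ i.i.d.\ uniform points on $\Lambda_l$ have, after reordering, the joint density $(m-1)!/l^{m-1}$ on the ordered simplex $-l/2\leq x_1\leq\cdots\leq x_{m-1}\leq l/2$. Performing the affine change of variables $L_j=x_j-x_{j-1}$ (with $x_0=-l/2$, $x_m=l/2$) has unit Jacobian and identifies the image with the simplex $\{L_j\ge 0,\ \sum_j L_j=l\}$. Writing the constraint as a $\delta$-function yields (\ref{n-distribution}) directly. The dependence of the $L_j$ is visible from the single constraint $\sum_j L_j=l$.

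For part (c), I would integrate out variables in (\ref{n-distribution}) to obtain the $k$-dimensional marginal of any fixed $L_{j_1},\dots,L_{j_k}$. A standard Dirichlet-type integration gives the density proportional to $(l-L_{j_1}-\cdots-L_{j_k})^{m-1-k}/l^{m-1}$ times a combinatorial prefactor. In the thermodynamic limit $m/l\to\lambda$ the identity $(1-x/l)^{m-1-k}\to e^{-\lambda x}$ converts the constraint factor into a product of exponentials, yielding exactly (\ref{marginals}). Independence of the limit variables then follows from the factorization of the limiting marginal densities, and weak convergence of $dP_{l,m}$ to the infinite product measure $\sigma_\lambda$ follows from convergence of all finite-dimensional marginals and Kolmogorov's extension theorem.

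The main technical point, and the only place where care is needed, is part (c): one must verify that the integration of $\delta(L_1+\cdots+L_m-l)$ against the remaining $m-k$ variables really produces the asymptotically independent exponentials, and that the combinatorial prefactor $m(m-1)\cdots(m-k)/l^k$ converges to $\lambda^k$. The first two parts are essentially classical and should go through with no obstacles.
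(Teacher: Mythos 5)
Your proposal is correct, and all three steps are the standard ones. The paper itself gives no proof of this theorem: it is stated as a classical fact with a pointer to the reference \cite{LZ}, so there is no in-paper argument to compare against. Your route -- binomial-to-Poisson limit for part (a), the order-statistics/spacings change of variables with unit Jacobian for part (b), and Dirichlet-type integration of the $\delta$-constrained density followed by $(1-x/l)^{m-1-k}\to e^{-\lambda x}$ for part (c) -- is exactly the argument one would supply, and the factorization of the limiting marginals together with Kolmogorov consistency does give the weak convergence to the product measure.

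One small slip: the combinatorial prefactor arising when you integrate out $m-k$ of the spacings is
\begin{equation*}
\frac{(m-1)!}{(m-k-1)!\,l^{k}}=\frac{(m-1)(m-2)\cdots(m-k)}{l^{k}}\,,
\end{equation*}
i.e.\ a product of $k$ factors, not the $k+1$ factors $m(m-1)\cdots(m-k)$ you wrote (the latter would diverge like $m\lambda^{k}$ after dividing by $l^{k}$). With the correct count the prefactor converges to $\lambda^{k}$ as claimed, so this is a typo rather than a gap.
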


\bigskip

\subsection{Remark 2} (Log intervals \cite{LZ},\cite{JPZ}.)

\smallskip

\noindent We can estimate the maximal length,
$(\max_{1\leqslant j \leqslant m} \, L_{j}^{\omega})_l$, of an interval in the family $\left\{I_{j}^{\omega}\right\}_{j=1}^{m}$
of random sub-intervals of $\Lambda_l =[-l/2,l/2]$ when $l \rightarrow \infty$.
To this end we define the events
\begin{equation}\label{event}
A_{l}^{\delta}:=\{\omega: \frac{(\max_{1\leqslant j \leqslant m} \, L_{j}^{\omega})_l}{\ln l} \,
\leqslant \, \frac{\delta}{\lambda}\} \ , \  l \in \mathbb{N} \  {\rm{and}} \ l > 2 \ ,
\end{equation}
as well as the following ``tail" event:
\begin{equation}\label{tail-event}
A_{\infty}^{\delta} := \limsup_{l \rightarrow \infty} A_{l}^{\delta} =
\bigcap_{n=1}^{\infty} \bigcup_{l=n}^{\infty} A_{l}^{\delta} \ .
\end{equation}
By definition, (\ref{tail-event}) is the event that for increasing $l$ the length of the largest subinterval in $\Lambda_l$ is bounded from
above by the nonrandom number $(\delta \ln l)/\lambda$ for all but finitely many $l\in \mathbb{N}$. The following theorem holds:

\begin{theorem}\label{theorem-log}
Let $\lambda > 0$ be a mean concentration of the point Poisson ``impurities" on $\mathbb{R}$.
Then \\
(i) for any $\delta > 4$, one has:
\begin{equation}\label{> 4}
\mathbb{P}(A_{\infty}^{\delta}) = 1 \ ,
\end{equation}
(ii) for any $\delta \leq 2$, one has:
\begin{equation}\label{< 2}
\mathbb{P}(A_{\infty}^{\delta}) = 0 \ .
\end{equation}

\end{theorem}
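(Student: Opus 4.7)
Both statements of Theorem~\ref{theorem-log} are Borel--Cantelli arguments applied to the maximum-gap variable $M_l := \max_{1\le j\le m}L_j^{\omega}$ in the box $\Lambda_l=[-l/2,l/2]$. In the thermodynamic limit of Theorem~\ref{distributions}(c) the gap-lengths between consecutive Poisson points are i.i.d.\ Exp$(\lambda)$ and the total count $N_l$ is Poisson with mean $\lambda l$, so the needed estimates reduce to standard extreme-value calculations for exponentials.

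For part (i) I would use the direct Borel--Cantelli lemma applied to the complements $B_l^\delta := (A_l^\delta)^c = \{M_l > \delta\ln l/\lambda\}$. The plan is to estimate $\mathbb{P}(M_l>L)$ by first truncating $N_l$ at $2\lambda l$ by a Chernoff bound (which fails only with probability $e^{-c\lambda l}$) and then using a union bound over the at most $2\lambda l$ gaps, each of which is exponential with rate $\lambda$. This produces $\mathbb{P}(B_l^\delta)\le C l^{k}\,e^{-\lambda L}=Cl^{k-\delta}$ with $L=\delta\ln l/\lambda$, for some small integer $k\ge 1$ that accounts for the number of ways a long gap can arise and for the boundary terms. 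Choosing $k$ large enough to cover all bookkeeping gives summability of $\sum_{l\in\mathbb N}\mathbb{P}(B_l^\delta)$ whenever $\delta>k+1$, and the crude estimate that yields $\delta>4$ is precisely this kind of bound. Direct Borel--Cantelli then says $B_l^\delta$ occurs only for finitely many $l$ a.s., which forces $A_l^\delta$ to hold for all but finitely many $l$, hence $\mathbb{P}(A_\infty^\delta)=1$.

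For part (ii) the events $B_l^\delta$ at different $l$ are strongly correlated (they are read off the same realization), so I would apply the second Borel--Cantelli lemma only after manufacturing genuinely independent events. Choose a rapidly growing subsequence $l_k=2^k$ and, inside each annulus $\Delta_k:=\Lambda_{l_k}\setminus\Lambda_{l_{k-1}}$, define the event $E_k$ that $\Delta_k$ contains a subinterval of length greater than $\delta\ln l_k/\lambda$ devoid of Poisson points. By the independence of the Poisson field on disjoint sets, the $E_k$ are mutually independent. I would partition $\Delta_k$ into $\sim l_k/\ln l_k$ disjoint test intervals of length $\delta\ln l_k/\lambda$, each empty with probability $l_k^{-\delta}$, use inclusion--exclusion (or the standard $1-(1-p)^n\ge 1-e^{-np}$) to bound $\mathbb{P}(E_k)$ from below, and verify that $\sum_k \mathbb{P}(E_k)=\infty$ in the claimed range $\delta\le 2$. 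The second Borel--Cantelli lemma then gives $E_k$ infinitely often a.s.; since $E_k\subseteq B_{l_k}^\delta$ and $M_l$ is monotone in $l$, this propagates to $B_l^\delta$ infinitely often along the full integer sequence, so $\mathbb{P}(A_\infty^\delta)=0$.

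The main obstacle is the quantitative calibration in part (ii): the naive partition argument above produces the threshold $\delta<1$ only, and sharpening it up to $\delta\le 2$ demands keeping the exact form $\mathbb{P}(M_l>L)=\mathbb{E}_{N_l}[1-(1-e^{-\lambda L})^{N_l}]$ instead of its union-bound upper estimate, then averaging over the Poisson fluctuations of $N_l$ with the correct logarithmic corrections to the extreme-value law. The bookkeeping of boundary effects on $\Lambda_l$ and of the passage from the sparse subsequence $\{l_k\}$ back to all integer $l$ (using monotonicity of $M_l$ together with $\ln l_{k+1}/\ln l_k\to 1$) is routine, but getting the constants $4$ and $2$ exactly is the delicate part and is where the asymmetry between the two halves of the theorem really sits.
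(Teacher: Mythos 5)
For part (i) your route is sound but genuinely different from the paper's: you work with the (asymptotically) i.i.d.\ exponential spacings and a union bound over the $\approx\lambda l$ gaps, whereas the paper stays in the finite-volume binomial model, tiles an enlarged box by $2N_l\sim\lambda l/(\delta\ln l)$ \emph{half-length} test intervals $J_i^l$ of size $\delta\ln l/(2\lambda)$, and uses the pigeonhole observation that a gap longer than $\delta\ln l/\lambda$ must contain an empty $J_i^l$. That tiling is where the constant $4$ comes from: $\mathbb{P}(J_i^l\ \text{empty})\approx l^{-\delta/2}$ times $\sim l/\ln l$ intervals is summable in $l$ iff $\delta>4$. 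Your union bound over full-length gaps would in fact give summability already for $\delta>2$, so your version of (i) is, if anything, stronger; first Borel--Cantelli then finishes as you say.

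The genuine gap is in part (ii), and you have put your finger on exactly the right spot without resolving it. The paper proves (ii) by (a) lower-bounding $\mathbb{P}((A_l^\delta)^c)$ by the probability that a \emph{single} half-length interval $J_i^l$ is empty, i.e.\ by $\approx l^{-\delta/2}$, whose sum over $l$ diverges precisely for $\delta\le 2$, and (b) asserting that the events $(A_l^\delta)^c$ are independent so that the second Borel--Cantelli lemma applies. Step (a) does not follow from the inclusion $(A_l^\delta)^c\subset\bigcup_i\{J_i^l\ \text{empty}\}$ (one term of a union that \emph{contains} an event does not lower-bound that event's probability), and step (b) is exactly the dependence problem you identified: the events are read off nested windows of a single realization. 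Your annulus construction is the honest way to manufacture independence, but, as you concede, it only yields $\sum_k\mathbb{P}(E_k)=\infty$ for $\delta$ up to about $1$, and your hope that "keeping the exact form" of the extreme-value law will push this to $\delta\le 2$ cannot be realized: the a.s.\ asymptotics $\lambda M_l/\ln l\to 1$ for the maximal spacing of a rate-$\lambda$ Poisson process implies that for every $\delta>1$ the event $A_l^\delta$ holds for all sufficiently large $l$ almost surely. So your proposal, as written, establishes (ii) only for $\delta<1$ -- essentially the sharp range an independent-blocks argument can reach -- and the remaining range $1<\delta\le 2$ is carried in the paper only by the two unjustified steps above. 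The constructive way to finish is to prove (ii) for $\delta<1$ and note that this, together with (i), already pins the largest gap at order $\ln l/\lambda$, which is all the main text uses; promising a calibration up to $\delta\le 2$ is not something the probabilistic facts will support.
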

\noindent \textbf{Proof}: (i) The complementary event  to (\ref{event}) is
\begin{eqnarray*}
(A_{l}^{\delta})^c := \big\{ \omega:  (\max_{1\leqslant j \leqslant m} \, L_{j}^{\omega})_l \, > \,
\frac{\delta}{\lambda} \, \ln l  \big\}.
\end{eqnarray*}
Let $N_l:= \big[{\lambda l}/{(\delta \ln l)}\big] +1$, where $[x]$ denotes the integer part of $x\geq0$, and define a
new interval:
\begin{equation}\label{new-interval}
{\widetilde{\Lambda}}_{l} := [-{N_l}(\frac{\delta}{2\lambda} \ln l)\, , \,
{N_l}(\frac{\delta}{2\lambda} \ln l)] \, \supset \, \Lambda_{l} \ .
\end{equation}
Split this bigger interval into $2 N_l$ identical disjoint intervals $\{J_{i}^{l}\}_{i = 1}^{2 N_l}$ of size
${\delta}{(2\lambda)^{-1}} \ln l$. If the event $(A_{l}^{\delta})^c$ occurs, then there
exists at least one empty interval $J_{i}^{l}$ (interval without any impurities), i.e.,  
\begin{eqnarray}\label{event-empty}
(A_{l}^{\delta})^c \, \subset \, \bigcup_{1 \leqslant i \leqslant 2 N_l} \{\omega: J_{i}^{l} \ \textrm{is empty} \} \ .
\end{eqnarray}
Note that for a given number of $m-1$ impurities on the finite interval $\Lambda_l$ the RPF 
$\left\{X_{l}^\omega\right\}$ is \textit{binomial} for any two sets: $J_{i}^{l}$ and $\Lambda_{l}\setminus J_{i}^{l}$.
Since the probability for the interval $J_{i}^{l}$ to be empty depends only on its size 
${\delta}{(2\lambda)^{-1}} \ln l$ and the size of $\Lambda_l$, one obtains by (\ref{new-interval}) and (\ref{event-empty})
the estimate
\begin{equation}\label{binom-Pr-empty}
\mathbb{P}^{\, bin}_{m-1,l}((A_{l}^{\delta})^c) \,
\leq 2 N_l\left(1- \frac{\delta}{2\lambda} \ln l/|\widetilde{\Lambda}_{l}|\right)^{m-1} \leq
2 N_l\left(1- \frac{1}{2 N_l}\right)^{(N_l -1)\, \delta \ln l} \ .
\end{equation}
To get the last inequality we put $m-1 = [\lambda \, l] + 1 \geq \lambda \, l$, see (\ref{lambda}), 
and we use that $(N_l - 1)\, \delta \ln l \leq \lambda \, l$. Since the estimate (\ref{binom-Pr-empty}) yields
\begin{equation}\label{lim-binom}
\lim_{l \rightarrow \infty}
\mathbb{P}^{\, bin}_{[\lambda \, l]+1,l}((A_{l}^{\delta})^c)/\left[2 ([{\lambda l}/{(\delta \ln l)}] +1) 
\ l^{-\delta/2}\right] \leq 1 \ ,
\end{equation}
we obtain that for $\delta > 4$
\begin{eqnarray}\label{sum-bin}
\sum_{l \geqslant 1} \mathbb{P}^{\, bin}_{[\lambda \, l]+1,l}((A_{l}^{\delta})^c) \, < \, \infty \ .
\end{eqnarray}
By Theorem \ref{distributions} binomial distribution $\mathbb{P}^{\, bin}_{[\lambda \, l]+1,l}(\cdot)$ converges
to the Poisson distribution $\mathbb{P}(\cdot)$ on $\mathbb{R}$, when $l \rightarrow \infty$. Then
estimate (\ref{lim-binom}) and (\ref{sum-bin}) imply
\begin{eqnarray*}
\sum_{l \geqslant 1} \mathbb{P}((A_{l}^{\delta})^c) \, < \, \infty.
\end{eqnarray*}
Therefore, by the Borel-Cantelli lemma $\mathbb{P}((A_{\infty}^{\delta})^c)=0$, which is equivalent to (\ref{> 4}).
This also implies that there exists a subset $\widetilde{\Omega}\subset\Omega$
of full measure, $\mathbb{P}(\widetilde{\Omega}) =1$, such that for each $\omega\in \widetilde{\Omega}$
one can find $l_{0}(\omega) < \infty$ with
\begin{eqnarray*}
\mathbb{P} \ \{\omega: (\max_{1\leqslant j \leqslant m} \, L_{j}^{\omega})_l \, \leqslant \,
\frac{\delta}{\lambda} \, \ln l\} = 1 \ , \ {\rm{for}} \ \delta > 4 \ ,
\end{eqnarray*}
for all $l \geqslant l_{0}(\omega)$ and $m \geq [\lambda \, l]$. \\
(ii) Consider the event $(A_{l}^{\delta})^c $ and note that one can estimate its binomial probability
from below taking only one term in the union (\ref{event-empty}). Then
\begin{equation}\label{binom-Pr-emptyBIS}
\mathbb{P}^{\, bin}_{m-1,l}((A_{l}^{\delta})^c) \,
\geq \left(1- \frac{\delta}{2\lambda} \ln l/|\widetilde{\Lambda}_{l}|\right)^{m-1} \geq
\left(1- \frac{\delta}{2\lambda} \ln l/[\lambda \, l]\right)^{[\lambda \, l]} \ ,
\end{equation}
where we put $m-1 = [\lambda \, l]$, see (\ref{lambda}), and we use (\ref{new-interval}). Then (\ref{binom-Pr-emptyBIS})
yields
\begin{equation}\label{lim-binomBIS}
\lim_{l \rightarrow \infty}
\mathbb{P}^{\, bin}_{[\lambda \, l],l}((A_{l}^{\delta})^c)/ (l^{-\delta/2}) \geq 1 \ ,
\end{equation}
By the same arguments as in (i) one concludes from (\ref{lim-binomBIS}) that for $\delta \leq 2$
\begin{eqnarray*}
\sum_{l \geqslant 1} \mathbb{P}((A_{l}^{\delta})^c) \, = \, \infty.
\end{eqnarray*}
Since the events $\{(A_{l}^{\delta})^c\}_{l > 2}$ are independent, by the Borel-Cantelli lemma we obtain:
$\mathbb{P}((A_{\infty}^{\delta})^c)=1$, or equivalently (\ref{< 2}). This again can be translated as
\begin{eqnarray*}
\mathbb{P} \ \{\omega: (\max_{1\leqslant j \leqslant m} \, L_{j}^{\omega})_l \, \geq \,
\frac{\delta}{\lambda} \, \ln l\} = 1 \ , \ {\rm{for}} \ \delta \leq 2 \ ,
\end{eqnarray*}
for all $l \geqslant l_{0}(\omega)$ and $m > [\lambda \, l]$.    \hfill $\square$

\bigskip

Since the Poisson RPF on $\mathbb{R}$ conditional to the interval $[-l/2,l/2]$ and to $m-1$ impurities
\textit{coincides} with the binomial RPF, instead of increasing boxes with finite binomial RPF configurations
one can consider \textit{restrictions} of infinite Poisson RPF configurations to the increasing set of windows $[-l/2,l/2]$, 
cf \cite{JPZ}.
Then Theorem \ref{theorem-log} claims that \textit{almost surely} (with respect to the Poisson RPF distribution) 
the maximal length of impurity-free sub-intervals, $(\max_{1\leqslant j \leqslant m} \, L_{j}^{\omega})_l$, in this 
windows is bounded from below and above by $({\rm const }) \ln l/\lambda$.

\bigskip


 \end{document}